\providecommand{\keywords}[1]
{{
  \small	
  \textbf{{Keywords---}} #1
}}
\theoremstyle{plain}
\newtheorem{theorem}{Theorem}[section]
\newtheorem{lemma}{Lemma}[section]
\theoremstyle{definition}
\newtheorem{example}{Example}
\newtheorem{assumption}{Assumption}[section]
\newtheorem{remark}{Remark}[section]
\renewcommand\thmcontinues[1]{Continued}
\newcommand{\norm}[1]{\left\|#1\right\|}
\newcommand{\ip}[2]{\left<#1, #2\right>}
\newcommand{\supp}{\mathrm{supp} \hspace{0.15em}}
\NewDocumentCommand{\defmathletter}{m}{%
    \expandafter\newcommand\csname b#1\endcsname{\mathbb{#1}}%
    \expandafter\newcommand\csname c#1\endcsname{\mathcal{#1}}%
}
\NewDocumentCommand{\defmathletters}{>{\SplitList{,}}m}{\ProcessList{#1}{\defmathletter}}
\NewDocumentCommand{\defvector}{m}{%
    \expandafter\newcommand\csname v#1\endcsname{\mathbf{#1}}%
}
\NewDocumentCommand{\defvectors}{>{\SplitList{,}}m}{\ProcessList{#1}{\defvector}}
\title{Untestability of Average Slutsky Symmetry}
\author{Haruki Kono \thanks{Email: hkono@mit.edu. I am grateful to Alberto Abadie, Isaiah Andrews, Kengo Kato, Sebastiaan Maes, Raghav Malhotra, Whitney Newey, and participants in MIT econometrics lunch seminar and Econometric Society World Congress 2025 for their feedback and helpful discussions. I acknowledge financial support from the Jerry A. Hausman Fellowship.}}
\affil{MIT}
\date{\monthyeardate\today}
\begin{document}

\maketitle

\begin{abstract}
Slutsky symmetry and negative semidefiniteness are necessary and sufficient conditions for the rationality of demand functions. 
While the empirical implications of Slutsky negative semidefiniteness in repeated cross-sectional demand data are well understood, the empirical content of Slutsky symmetry remains largely unexplored. 
This paper takes an important first step toward addressing this gap. We show that the average Slutsky matrix is not identified and that its identified set always contains a symmetric matrix, implying that the symmetry of the average Slutsky matrix is untestable and that individual Slutsky symmetry cannot be tested through the average. 
Nevertheless, we demonstrate that, by imposing bounds on the income elasticity of demand, Slutsky symmetry implies a set of functional inequality constraints that are testable.
\end{abstract}

\keywords{Slutsky symmetry, rationality, cross-sectional demand, continuity equation}

\section{Introduction}

Rationality is a central concept in economic theory, serving as a fundamental assumption in the analysis of consumer behavior. 
It assumes that consumers make decisions that are consistent with utility maximization, a principle that supports many economic models and empirical research.
Testing this assumption is crucial for validating theoretical models and understanding consumer decision-making.

When individual demand functions are available, the Hurwicz-Uzawa theorem (\cite{hurwicz1971integrability}) provides a complete characterization of rationality. 
To state the theorem, let $p$ be a $d$-dimensional price vector and $y$ be an income, both of which are relative to the price of the numeraire.
Let $q (p, y)$ be the $d$-dimensional vector of quantities demanded when the price is $p$ and the income is $y.$
We say that a demand function $q$ is rational if it maximizes a utility function subject to the budget constraint.
According to the Hurwicz-Uzawa theorem, a demand function $q$ is rational if and only if its Slutsky matrix, which is defined as a $d \times d$ matrix
$$
    S_q (p, y)
    \coloneqq
    D_p q (p, y)
    +
    D_y q (p, y)
    q (p, y)^\prime
    ,
$$
is both symmetric and negative semidefinite.

In many real-world applications, individual demand functions are not observed and only cross-sectional data is available.
To investigate the population rationality, researchers have explored the rationalizability of the average demand function conditional on observable characteristics, such as price and income, treating it as if it is generated by an individual representing the population.
See, for example, \cite{lewbel1995consistent}, \cite{lewbel2001demand}, \cite{haag2009testing}, and \cite{hoderlein2011many}.

However, this approach involves some disadvantages.
For instance, as Theorem 1 of \cite{lewbel2001demand} demonstrates, the rationality of the average demand function is irrelevant to the rationality of individual demand functions: it is possible that all individuals in a population are rational while the average demand is not, and vice versa. 
Moreover, it is often reported that cross-sectional mean regressions fail to explain the variation of demand adequately because of the unobserved preference heterogeneity.
(\cite{hoderlein2011many}, \cite{hausman2016individual}).
These facts indicate the need to investigate all the information about the heterogeneity contained in the data.

Compared with testing the rationality of an individual demand function, testing whether a cross-sectional demand data is consistent with a rational population is a harder problem because individual Slutsky matrices are not observed.
This difficulty raises a fundamental question: can we infer the rationality of a population from cross-sectional demand data? 
More specifically, is it possible to construct a statistical test with nontrivial power to detect consumer irrationality only from demand distributions?

Recent research has made progress in addressing this question. 
\cite{hausman2016individual} provide a necessary and sufficient condition for cross-sectional demand distributions to be  rationalizable when only two goods are present.
For cases involving more than two goods, \cite{dette2016testing} provide the empirical content of the negative semidefiniteness of the Slutsky matrix.
Specifically, they show that the quadratic form of the average Slutsky matrix $\bE [S_{Q^\ast} (p, y)]$ conditional on price and income, where $Q^\ast$ is the random demand function representing the population, is identified.
If an estimate of the quadratic form is significantly positive, then it means that the observed data is likely to be inconsistent with a population having an negative semidefinite Slutsky matrix---let alone rationality.
These studies demonstrate the feasibility of testing rationality by means of the negative semidefiniteness of the Slutsky matrix. 

However, these methods neglect the symmetry condition, leading to statistical tests that are overly conservative and fail to fully leverage the implications of rationality.
The symmetry condition can be more critical to the rationality than the negative semidefiniteness, given that the former is highly sensitive to small perturbations while the latter is not.
If Slutsky symmetry is testable, one should be able to construct a much more powerful test for rationality by testing both conditions.

Despite its theoretical importance, the empirical content of the Slutsky symmetry condition remains poorly understood.
This paper provides a first step toward elucidating the implications of individual Slutsky symmetry for cross-sectional demand distributions.
The main result of this paper claims that the average Slutsky matrix $\bE [S_{Q^\ast} (p, y)]$ is not identified, even though its quadratic form is identified, as shown in \cite{dette2016testing}.
Moreover, we show that the identified set of the average Slutsky matrix always contains a symmetric matrix.

An immediate consequence of our results is the fundamental untestability of individual Slutsky symmetry via the average Slutsky matrix.
This fact indicates that the Slutsky symmetric condition is totally distinct to the negative semidefiniteness condition, which is testable based on the quadratic form of the average Slutsky matrix (\cite{dette2016testing}).

The untestability of the average Slutsky symmetry does not immediately rule out the possibility of testing Slutsky symmetry from cross-sectional distributional data.
After the first draft of this paper was made public, \cite{gunsilius2025nonparametric} derive the empirical content of individual Slutsky symmetry without relying on the average Slutsky matrix.
They show that individual Slutsky symmetry implies a nonparametric conditional quantile restriction, which is testable.

Although the average Slutsky symmetry is untestable in general, it can become testable under additional assumptions.
In particular, we consider imposing bounds on the income elasticity of demand.
By restricting the patterns of demand substitution in this way, Slutsky symmetry leads to functional inequality conditions that are, in principle, testable.
Constructing a formal statistical test based on these conditions is beyond the scope of this paper, but existing methods for testing functional inequalities could be applied.

A key theoretical contribution of this paper is the development of a constructive method to generate a random demand function that satisfies prescribed marginal distributions while remaining consistent with the Slutsky conditions.
In general, both the levels and the derivatives of individual demand functions vary across consumers, reflecting heterogeneous preferences.
The construction in this paper shows, however, that even if one fixes a deterministic rule describing how demand changes with prices and income, it is still possible to produce random demand distributions that are consistent with observed marginals and with rational behavior.
The key insight is that once the local response of demand to prices and income is fixed, the distribution of demand at all other price-income pairs can be obtained from the distribution at a single reference point by following the trajectories implied by this response rule.
Mathematically, this propagation can be interpreted as the solution to a system of differential equations that transports the reference distribution along price-income changes.

\bigskip

\noindent
\textbf{Related literature.}
\cite{hurwicz1971integrability} investigate the integrability of individual demand functions and give a necessary and sufficient condition for rationality based on the Slutsky matrix.
\cite{lewbel2001demand} considers a population that is heterogeneous in preference and provides conditions for the average demand function to be rational.
\cite{haag2009testing} and \cite{hoderlein2011many} among others consider the estimation and inference on the average demand function under the rationality shape restriction.

For cross-sectional demand distributions, \cite{hausman2016individual} give a necessary and sufficient condition for observed datasets to be consistent with a rational demand system for the cases when there are only two goods.
When more than two goods are present, the characterization of rationalizability is largely open.
\cite{dette2016testing} provide a necessary condition that the data needs to satisfy for it to be rationalizable focusing on the negative semidefiniteness of the Slutsky matrix and propose a statistical testing for rationality.
\cite{maes2024beyond} construct a testing procedure of rationality based on higher order moments of demand distributions.
They also observe that the average Slutsky matrix is not identified from those moments.
More recently, \cite{gunsilius2025nonparametric} derive the empirical content of individual Slutsky symmetry without relying on the average Slutsky matrix.

This paper is also related to the literature on random utility models.
\cite{mcfadden1990stochastic} and \cite{mcfadden2005revealed} show that the axiom of revealed stochastic preference characterizes rationalizability of stochastic choice functions defined on a finite number of choice sets.
In a similar setup, \cite{kitamura2018nonparametric} constructed a statistical test for the axiom.
In the context of discrete choice, \cite{bhattacharya2025integrability} recently gives the complete characterization of the rationalizability of demand distributions.

\section{Setup and Results}

\subsection{Setup}

For $d \geq 2,$ we consider an economy with $d + 1$ goods.
Relative to the first good, their prices are encoded into a price vector $p \in \cP \subset \bR_+^d$ where $\bR_+ \coloneqq (0, \infty).$
Let $y \in \cY \subset \bR_+$ be the relative income.
For given price $p$ and income $y,$ a consumer demands $q (p, y) \in \bR_+^d.$
Notice that we assume the homogeneity of demand functions at this point.
We also assume Warlas' law, i.e., the demand for the numeraire is $y - p^\prime q (p, y),$ and consequently, $p^\prime q (p, y) < y$ is assumed.

We assume 
$$
    \cP 
    =
    \prod_{i = 1}^d
    \left[\underline p_i, \overline p_i\right]
    \text{ and }
    \cY
    =
    [\underline y, \overline y]
$$
for $0 < \underline p_i < \overline p_i$ and $0 < \underline y < \overline y.$
Let $\cX \coloneqq \cP \times \cY \subset \bR_+^{d + 1}.$
We restrict ourselves to demand functions that are in the space 
$$
    \cQ
    \coloneqq
    \left\{
        q : \cX \to \bR_+^d
        \ \ \Big | \
        \begin{array}{l}
             q \text{ is continuously differentiable in each variable.}\footnotemark \\
             p^\prime q (p, y) < y \text{ for all } (p, y) \in \cX.
        \end{array}
    \right\}
    .
$$
\footnotetext{This is a weaker condition than the demand function being $C^1$ because the partial derivative with respect to one variable need not be continuous in other variables.}

In the real world, consumer's preference heterogeneity is present.
In this sense, consumer's demand is stochastic from the perspective of an econometrician.
Let $Q^\ast$ be a random individual demand function drawn from a probability distribution on $\cQ.$
The econometrician is assumed to observe cross-sectional demand distributions, that is, (s)he observes the distribution $\mu_x$ of $Q^\ast (x)$ for each $x \in \cX,$ but no joint distribution of demands across different price-income levels is available.
Regarding $Q^\ast$ as a stochastic process indexed by $\cX,$ we often call $\mu_x$ the (one-dimensional) marginal distribution of $Q^\ast$ at $x.$
Note that although $(\mu_x)_{x \in \cX}$ is a population object, we assume that it is available since we are interested in identification.
Assume that the interior of the support of $\mu_x,$ denoted by $\Omega_x,$ is not empty.

\begin{example}
For the three-good case ($d = 2$), consider a random Cobb-Douglas demand $Q_i^\ast (p, y) = y \eta_i / p_i$ where $(\eta_1, \eta_2)^\prime$ is a random vector such that $\eta_1, \eta_2 > 0$ and $\eta_1 + \eta_2 < 1.$
Then $Q^\ast = (Q_i^\ast, Q_2^\ast)^\prime$ is a $\cQ$-valued random element.
The demand distribution $\mu_x$ conditional on $x = (p, y)$ is a distribution supported on a subset of the triangle generated by $(0, 0),$ $(y/p_1, 0),$ and $(0, y/p_2).$
\end{example}

Recall that an individual demand function $q \in \cQ$ is said \textit{rational} if it is induced by utility maximization.
The Hurwicz-Uzawa theorem states that $q$ is rational if and only if its Slutsky matrix
\begin{equation} \label{eq:slutsky-matrix}
    S_q (x)
    \coloneqq
    D_p q (x)
    +
    D_y q (x)
    q (x)^\prime
\end{equation}
is symmetric and negative semidefinite for each $x = (p, y).$

\subsection{Main Result}

As we discussed in the previous section, \cite{dette2016testing} propose a method to test individual rationality based on the negative semidefiniteness.
Their testing procedure is roughly as follows.
In their Theorem 1, they show that the quadratic form of the average Slutsky matrix is identified, that is, $v^\prime \bE [S_{Q^\ast} (x)] v$ is identified for each $v \in \bR^d$ and $x \in \cX.$
If the population consists of rational individuals, then $S_{Q^\ast} (x)$ is negative semidefinite almost surely, and hence, $v^\prime \bE [S_{Q^\ast} (x)] v \leq 0$ should hold.
One can reject the null hypothesis that the population is rational if an estimate of $v^\prime \bE [S_{Q^\ast} (x)] v$ is significantly positive for some $x$ and $v.$\footnote{To be more precise, \cite{dette2016testing} show a stronger result that the quadratic form of the average Slutsky matrix conditional on the value of $Q^\ast.$ Thus, their test is more powerful than what is described here.}

Although this method has nontrivial power, it is likely to be overly conservative since it completely neglects the other critical component of rationality, Slutsky symmetry.
We shall investigate its testability based on the average Slutsky matrix.
This question is important since the symmetry property is not robust to small perturbations while the negative semidefiniteness is.
If Slutsky symmetry is testable, one should be able to construct a much more powerful test for rationality by testing both conditions.

If individual Slutsky matrix $S_{Q^\ast} (x)$ is symmetric (almost surely), so is its average.
In what follows, we address the question of what we can say about the average Slutsky matrix $\bE [S_{Q^\ast} (x)]$ from cross-sectional demand distributions $\mu_x.$

First, we observe that the average Slutsky matrix is not identified in an explicit way because of the second term of (\ref{eq:slutsky-matrix}).
The expectation of the second term is written as
\begin{equation} \label{eq:slutsky-income-effect}
    \bE [
        D_y Q^\ast (x)
        Q^\ast (x)^\prime
    ]
    =
    \lim_{\Delta y \to 0}
    \frac{1}{\Delta y}
    \left(
        \bE [
            Q^\ast (p, y + \Delta y) Q^\ast (p, y)^\prime
        ]
        -
        \bE [
            Q^\ast (p, y) Q^\ast (p, y)^\prime
        ]
    \right)
    ,
\end{equation}
but it is unclear how to identify $\bE [Q^\ast (p, y + \Delta y) Q^\ast (p, y)^\prime]$ because it involves the joint distribution of demand at different income levels, and it is indeed not identifiable as Theorem \ref{thm:identified-set-average-slutsky} implies below.

The nonidentifiability of the average Slutsky matrix does not immediately imply that we cannot say anything about its symmetry.
The individual symmetry could have some empirical implications for the observable demand distributions.
Unfortunately, however, there is nothing we can say about the symmetry of the average Slutsky matrix from the cross-sectional demand data.
The following theorem formalizes the argument so far.
The proof is given in Appendix.

\begin{theorem} \label{thm:identified-set-average-slutsky}
Let $(\mu_x)_{x \in \cX}$ be such that Assumption \ref{ass:regularity-mu}.
The identified set $\cS_d$ of the function $x \mapsto \bE [S_{Q^\ast} (x)]$ that maps price-income pairs $x$ to the average Slutsky matrix at $x$ is given by
$$
    \cS_d
    =
    \left\{
        S : \cX \to \bR^{d \times d}
        \ \Big | \
        S \text{ is continuous, and }
        S_{i, j} (\cdot) + S_{j, i} (\cdot)
        =
        T_{i, j} (\cdot)
        \ 
        \forall i, j \in \{1, \dots, d\}
    \right\}
    ,
$$
where
$$
    T_{i, j} (x)
    \coloneqq
    T_{j, i} (x)
    \coloneqq
    D_{p_i} \int q_j d \mu_x (q)
    +
    D_{p_j} \int q_i d \mu_x (q)
    +
    D_y \int q_i q_j d \mu_x (q)
$$
is identified.
In particular, the average Slutsky matrix is not identified at any $x,$ and there necessarily exists a $\cQ$-valued random demand function such that $Q (x) \sim \mu_x$ and $\bE [S_Q (x)]$ is symmetric for all $x \in \cX.$
\end{theorem}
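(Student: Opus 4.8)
The plan is to prove both inclusions defining $\cS_d$ and then read off the two ``in particular'' claims from them.

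\textbf{Forward inclusion (the symmetry constraint is necessary).} First I would record the pointwise symmetrization identity for the Slutsky matrix of a fixed $q \in \cQ$. Writing the $(i,j)$ entry as $S_{q,i,j}(x) = D_{p_j} q_i(x) + D_y q_i(x)\, q_j(x)$ and adding the transposed entry, the income terms combine through the product rule, giving
$$
S_{q,i,j}(x) + S_{q,j,i}(x) = D_{p_j} q_i(x) + D_{p_i} q_j(x) + D_y\bigl(q_i q_j\bigr)(x).
$$
Taking expectations over $Q^\ast$ and interchanging the expectation with the partial derivatives---this is exactly where Assumption \ref{ass:regularity-mu} enters---turns the right-hand side into $D_{p_j}\int q_i\,d\mu_x + D_{p_i}\int q_j\,d\mu_x + D_y\int q_i q_j\,d\mu_x = T_{i,j}(x)$, an object depending only on the first and second moments of the marginals $(\mu_x)_{x\in\cX}$ and hence identified. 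Continuity of any candidate $x \mapsto \bE[S_{Q}(x)]$ follows because every $q \in \cQ$ is continuously differentiable, so $S_q$ is continuous and the expectation inherits continuity under the same regularity. This shows $\cS_d$ is contained in the claimed set.

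\textbf{Reverse inclusion (every admissible $S$ is realizable).} This is the substantive step and the main obstacle. Given a continuous $S$ with $S_{i,j} + S_{j,i} = T_{i,j}$, I must construct a $\cQ$-valued random demand function $Q$ with $Q(x)\sim\mu_x$ for all $x$ and $\bE[S_Q(x)] = S(x)$. Following the continuity-equation viewpoint advertised in the introduction, the idea is to fix a reference point $x_0\in\cX$, draw $Q(x_0)\sim\mu_{x_0}$, and propagate this single draw to every other $x$ by flowing along a deterministic velocity field $v(x,\cdot)$ on demand space. Matching the marginals amounts to requiring that the pushforward of $\mu_{x_0}$ along the flow equal $\mu_x$; through the continuity equation $\partial_x \mu_x + \div_q(\mu_x v) = 0$ this pins down the curl-free part of $v$ and automatically forces the symmetric part of $\bE[S_Q(x)]$ to equal $T(x)/2$, consistently with the constraint. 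The remaining freedom is a divergence-free field that can be added without disturbing any marginal, and I would use this freedom to tune the antisymmetric part $\tfrac12(S - S^\prime)$ of the target. The delicate points are (i) assembling the per-direction velocity fields into a jointly integrable flow over the multidimensional domain $\cX$ (a Frobenius/compatibility issue), (ii) guaranteeing that the resulting sample paths remain in $\cQ$, i.e. are continuously differentiable and satisfy the budget inequality $p^\prime q(p,y) < y$ together with positivity, and (iii) verifying the interchange of expectation and differentiation so that $\bE[S_Q]$ is computed as intended. I expect (i)--(ii) to be the crux and to require the bulk of the technical work.

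\textbf{The two consequences.} Nonidentification at every $x$ is immediate once both inclusions hold: for $d\ge 2$ and any admissible $S$, perturbing an off-diagonal pair by $S_{1,2}\mapsto S_{1,2}+c$ and $S_{2,1}\mapsto S_{2,1}-c$ with $c\neq 0$ leaves every constraint $S_{i,j}+S_{j,i}=T_{i,j}$ intact and preserves continuity, so $\cS_d$ contains distinct elements and is never a singleton. Finally, the symmetric candidate $S^{\mathrm{sym}}_{i,j}(x) := \tfrac12 T_{i,j}(x)$ is continuous (as $T$ is), symmetric (as $T_{i,j}=T_{j,i}$), and satisfies the constraint by construction; by the reverse inclusion it lies in $\cS_d$, so the flow construction with vanishing divergence-free part yields a random demand function whose average Slutsky matrix is symmetric everywhere, which is the last assertion. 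Note that this final existence claim only requires the base flow (zero adjustment), whereas the full reverse inclusion additionally needs that the divergence-free perturbation spans all antisymmetric targets.
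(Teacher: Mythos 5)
Your overall architecture matches the paper's: the forward inclusion via the pointwise symmetrization identity $S_{q,i,j}+S_{q,j,i}=D_{p_j}q_i+D_{p_i}q_j+D_y(q_iq_j)$ followed by interchange of expectation and differentiation, and the reverse inclusion via a deterministic flow matching the marginals through the continuity equation plus a divergence-free (zero normal component) perturbation that leaves every $\mu_x$ intact while tuning $\int w_{x,i}q_j\,d\mu_x$ to hit the antisymmetric part of the target. Your worry (i) about Frobenius-type joint integrability is one the paper sidesteps rather than solves: the space $\cQ$ only requires continuous differentiability in each variable \emph{separately}, so the construction composes one-dimensional flows sequentially (first in $p_1$, then $p_2$, \dots, then $y$) and only the $y$-direction field is perturbed; no compatibility condition across directions is needed. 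Your worry (ii) is handled automatically because the flows map $\bar\Omega_{\underline x}$ onto $\bar\Omega_x$, the support of $\mu_x$, which already consists of budget-feasible positive bundles.

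There is, however, one genuine error in your final paragraph. You assert that the symmetric target $S=\tfrac12 T$ is achieved by ``the flow construction with vanishing divergence-free part,'' i.e.\ by the base flow alone. That is false: the base flow $\bar Q$ yields $(\bE[S_{\bar Q}(x)])_{i,j}=D_{p_j}\int q_i\,d\mu_x+\bE[\bar v_{x,i}(\bar Q(x))\bar Q_j(x)]$, and neither term is symmetric in $(i,j)$ in general---the constraint only forces the \emph{sum} $S_{i,j}+S_{j,i}$ to equal $T_{i,j}$, not each entry to equal $T_{i,j}/2$. In the paper's construction, reaching $S=\tfrac12 T$ still requires a generically nonzero correction field $w_x$ with $a_{i,j}(x)=\tfrac12T_{i,j}(x)-D_{p_j}\int q_i\,d\mu_x-\bE[\bar v_{x,i}(\bar Q(x))\bar Q_j(x)]$. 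The existence of a symmetric element of $\cS_d$ is correctly obtained, as you do first, by checking $\tfrac12T\in\cS_d$ and invoking the reverse inclusion; the added remark that the zero adjustment suffices should be deleted. Beyond that, your reverse inclusion is a plan rather than a proof---the construction of the marginal-preserving flow (Poisson equation with Neumann boundary data, Schauder estimates, Cauchy--Lipschitz) and of the divergence-free field with prescribed moments is where essentially all of the paper's technical content lives---but the plan itself is the right one.
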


Assumption \ref{ass:regularity-mu} imposes regularity on the family $(\mu_x)_{x \in \cX}$ of demand distributions.
Specifically, it requires that $\mu_x$ has a smooth density, and that both the density and support vary smoothly with $x.$

\begin{remark}
Proposition 2 of \cite{maes2024beyond} asserts that the average Slutsky matrix is not ``automatically'' identified.
They explain that the quantity (\ref{eq:slutsky-income-effect}) is not identified in the same way that the other term $\bE [D_p Q^\ast (x)]$ is identified through the equation $\bE [D_p Q^\ast (x)] = D_p \bE [Q^\ast (x)].$
Their argument is not complete, as it merely rules out a particular strategy for identifying the average Slutsky matrix without addressing the possibility of alternative identification approaches.
Consequently, \cite{maes2024beyond} do not establish whether there exists an observationally equivalent demand system whose average Slutsky matrix is symmetric.
\end{remark}

\subsection{Practical Implications}

\noindent
\textbf{Testability of Slutsky symmetry.}
Theorem \ref{thm:identified-set-average-slutsky} has several useful implications.
First, as stated in the theorem, the average Slutsky matrix is not identified, and the identified set $\cS_d$ is unbounded in the sense that $S_{i, j} (x)$ can be arbitrarily large by $S_{j, i} (x)$ is small.
Moreover, Theorem \ref{thm:identified-set-average-slutsky} implies that no matter what cross-sectional demand distributions satisfying the regularity condition we observe, there always exists a stochastic demand system such that it is observationally equivalent to the true demand system and its average Slutsky matrix is symmetric.
Consequently, there is no way to infer whether $\bE [S_{Q^\ast} (x)]$ is symmetric or not from $(\mu_x)_{x \in \cX}.$

This negative result suggests that researchers collect additional demand data in order to identify the average Slutsky matrix and test Slutsky symmetry.
For example, let us assume that the joint distribution $\mu_{x, \tilde x}$ of $(Q^\ast (x), Q^\ast (\tilde x))$ is available for all $(x, \tilde x) \in \cX^2,$ rather than the marginal distribution $\mu_x.$
This setup corresponds to the situation where analysts can observe each individual's choice twice.
In this setup, the average Slutsky matrix is identified because the RHS of (\ref{eq:slutsky-income-effect}) is identified, and hence, it is possible to test the Slutsky symmetry by testing whether $\bE [S_{Q^\ast} (x)]$ is symmetric or not.

It is also worth mentioning that even though Theorem \ref{thm:identified-set-average-slutsky} does show that Slutsky symmetry is not testable based on the average Slutsky matrix, it does not immediately rule out the possibility of testing Slutsky symmetry from cross-sectional distributional data.
Indeed, individual symmetry could have observable implications through nonlinear statistics rather than the average, but we leave this for future work.

\bigskip

\noindent
\textbf{Adding income elasticity bounds.}
Now, we shall see that the average Slutsky symmetry can be testable by imposing bounds on the income elasticity of demand.
Let 
$$
    \varepsilon_i (x)
    \coloneqq
    D_y q_i (x)
    \cdot
    \frac{y}{q_i (x)}
$$
be the income elasticity of demand for $i$th good for demand system $q.$
Then, we have
$$
    D_y q_i (x) 
    \cdot 
    q_j (x) 
    = 
    \frac{1}{y}
    \varepsilon_i (x)
    q_i (x)
    q_j (x)
    .
$$
Let $\varepsilon^\ast$ be the income elasticity of demand $Q^\ast.$
It is easy to check
$$
    (E [S_{Q^\ast} (x)])_{i, j}
    =
    D_{p_j} \int q_i d \mu_x
    +
    \frac{1}{y}
    \bE [
        \varepsilon_i^\ast (x)
        Q_i^\ast (x)
        Q_j^\ast (x)
    ]
    .
$$

It is often reasonable to put bounds on the income elasticity of demand $\varepsilon_i^\ast.$
Assume that there are functions $\ell, u : \cX \to \bR$ such that $\ell (x) \leq \varepsilon_i^\ast (x) \leq u (x)$ for all $x$ and $i.$
Then we have
$$
    D_{p_j} \int q_i d \mu_x
    +
    \frac{\ell (x)}{y}
    \int q_i q_j d \mu_x
    \leq
    (E [S_{Q^\ast} (x)])_{i, j}
    \leq
    D_{p_j} \int q_i d \mu_x
    +
    \frac{u (x)}{y}
    \int q_i q_j d \mu_x
    .
$$
This inequality leads to a bound on the difference $(E [S_{Q^\ast} (x)])_{i, j} - (E [S_{Q^\ast} (x)])_{j, i},$ for $i < j:$ it must be lie in the interval
$$
    I_{i, j} (x)
    \coloneqq
    \left[
        \left(
            D_{p_j} \int q_i d \mu_x
            -
            D_{p_i} \int q_j d \mu_x
        \right)
        \pm
        \left(
            \frac{u (x) - \ell (x)}{y}
            \int q_i q_j d \mu_x
        \right)
    \right]
    .
$$
Since $I_{i, j} (x)$ is identified and estimable, the average Slutsky symmetry---$(E [S_{Q^\ast} (x)])_{i, j} = (E [S_{Q^\ast} (x)])_{j, i}$ for all $i < j$ and $x \in cX$---can be tested by checking whether the null hypothesis
$$
    H_0 
    :
    0 \in I_{i, j} (x)
    \text{ for all }
    i < j, x \in \cX
    .
$$
While constructing a statistical test for this hypothesis is beyond the scope of this paper, the literature on testing functional inequalities, such as \cite{lee2013testing}, \cite{lee2018testing}, and \cite{li2025general}, will work.

\section{Construction of Stochastic Demand Systems}

Let $S \in \cS_d$ be an element of the identified set of the average Slutsky matrix.
To prove Theorem \ref{thm:identified-set-average-slutsky}, it is sufficient to construct a $\cQ$-valued random demand function $Q$ such that 
\begin{align}
    Q (x) &\sim \mu_x 
    \tag{M}
    \label{eq:marginal-compliance}
    \\
    \bE [S_Q (x)] &= S (x)
    \tag{S}
    \label{eq:slutsky-equivalence}
\end{align}
for $x \in \cX.$
The goal of this section is to describe the construction of such a random demand function.

In overview, the construction proceeds in two steps.
First, we construct a preliminary random demand function $\bar Q$ such that (\ref{eq:marginal-compliance}) holds but not necessarily (\ref{eq:slutsky-equivalence}).
In the second step, we modify $\bar Q$ to obtain another random demand function $Q$ that satisfies (\ref{eq:slutsky-equivalence}) as well as (\ref{eq:marginal-compliance}).

\bigskip

\noindent
\textbf{Step 1.}
We begin with constructing a preliminary random demand function $\bar Q$ satisfying (\ref{eq:marginal-compliance}).
Let $\underline x \coloneqq (\underline p_1, \dots, \underline p_d, \underline y).$

\begin{lemma} \label{lem:smooth-demand-system}
Let $(\mu_x)_{x \in \cX}$ be such that Assumption \ref{ass:regularity-mu}.
Then, there exists a measurable function $\bar \Phi : \cX \times \bR^d \to \bR^d$ such that
\begin{enumerate}
    \item it is continuously differentiable in each variable,
    \item $\omega \mapsto \bar \Phi (x, \omega)$ is a homeomorphism for each $x,$
    \item $\bar \Phi (\underline x, \omega) = \omega$ for each $\omega,$ and
    \item $\bar \Phi (x, \cdot)_\# \mu_{\underline x} = \mu_x$  for each $x.$
\end{enumerate}
In particular, the $\cQ$-valued random function $\bar Q (x) \coloneqq \bar \Phi (x, \omega),$ where $\omega \sim \mu_{\underline x},$ satisfies (\ref{eq:marginal-compliance}).
\end{lemma}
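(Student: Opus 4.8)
The plan is to realize $\bar\Phi(x,\cdot)$ as the time-one map of a flow that transports $\mu_{\underline x}$ to $\mu_x$ along the straight segment $s \mapsto x(s) := \underline x + s(x - \underline x)$, $s \in [0,1]$, via the continuity equation (Moser's deformation argument). Writing $\rho_{x(s)}$ for the density of $\mu_{x(s)}$ guaranteed by Assumption \ref{ass:regularity-mu}, I would seek a time-dependent velocity field $v^x_s$ solving $\partial_s \rho_{x(s)} + \div(\rho_{x(s)} v^x_s) = 0$ and then define $\bar\Phi(x,\cdot) := \Phi^x_1$, where $\Phi^x_s$ is the flow $\tfrac{d}{ds}\Phi^x_s(\omega) = v^x_s(\Phi^x_s(\omega))$ with $\Phi^x_0 = \id$. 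The standard computation shows $(\Phi^x_s)_\# \mu_{\underline x} = \mu_{x(s)}$, so at $s = 1$ property 4 holds; and since the flow of a complete $C^1$ vector field is a $C^1$ diffeomorphism of $\bR^d$, property 2 follows.

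Because the support $\Omega_x$ moves with $x$, I would not solve the continuity equation directly on a moving domain. Instead I would first absorb the motion of the support into an explicit smooth family of diffeomorphisms. Using the smooth dependence of $\Omega_x$ on $x$ from Assumption \ref{ass:regularity-mu}, construct $\Psi_x : \Omega_{\underline x} \to \Omega_x$ that is $C^1$ jointly in $(x,\omega)$, a homeomorphism for each $x$, and equal to the identity at $x = \underline x$. Pulling the densities back by $\Psi_x$ yields a family $\tilde\rho_x := (\Psi_x^{-1})_\# \mu_x$ supported on the fixed domain $\Omega_{\underline x}$, with $\tilde\rho_{\underline x} = \rho_{\underline x}$. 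On this fixed domain I would run Moser's argument: solve the elliptic problem $\div(\nabla u^x_s) = -\partial_s \tilde\rho_{x(s)}$ with no-flux boundary conditions, set $\tilde v^x_s := \nabla u^x_s / \tilde\rho_{x(s)}$, and let $\tilde\Phi^x_s$ be its flow. Then $\bar\Phi(x,\cdot) := \Psi_x \circ \tilde\Phi^x_1$ transports $\mu_{\underline x}$ to $\mu_x$ and is a composition of homeomorphisms.

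Properties 1 and 3 would follow from the parameter dependence built into the construction. Property 3 is immediate: when $x = \underline x$ the segment is constant, so $\partial_s \tilde\rho_{x(s)} \equiv 0$, the velocity field vanishes, $\tilde\Phi^{\underline x}_1 = \id$, and $\Psi_{\underline x} = \id$, giving $\bar\Phi(\underline x,\cdot) = \id$. For property 1, I would invoke smooth dependence on parameters of both the elliptic solution and the ODE flow: since $\rho_{x(s)}$ and its $s$-derivative depend smoothly on $(x, s, \omega)$, so does $u^x_s$ by elliptic regularity, hence so does $\tilde v^x_s$, and hence the flow $\tilde\Phi^x_s$ is $C^1$ in $(x, \omega)$; composing with the smooth $\Psi_x$ preserves this. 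Finally, to make $\bar\Phi(x, \cdot)$ a homeomorphism of all of $\bR^d$ rather than merely of the supports, I would extend the velocity field to vanish outside a fixed neighborhood of $\bigcup_x \Omega_x$, so that the flow is the identity there.

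The main obstacle I anticipate is the degeneracy of the density at the boundary of the support: the Moser velocity $\tilde v = \nabla u / \tilde\rho$ can blow up where $\tilde\rho \to 0$, which threatens both the completeness of the flow and its continuous differentiability up to $\partial\Omega_{\underline x}$. Controlling this requires using the no-flux condition to show that $\nabla u$ vanishes at the boundary at a rate compatible with $\tilde\rho$, so that $\nabla u / \tilde\rho$ remains bounded and Lipschitz, together with the hypotheses on how the density decays near $\partial\Omega_x$. Verifying this boundary behavior, and checking that the flow carries $\Omega_{\underline x}$ exactly onto the fixed reference domain for every $s$, is where the regularity conditions of Assumption \ref{ass:regularity-mu} will do the real work.
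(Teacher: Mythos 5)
Your proposal is essentially the paper's argument: a Moser-type deformation in which one solves a Poisson equation with a no-flux (Neumann) boundary condition, sets the velocity to $\nabla u/\rho$, reduces the moving support to a fixed reference domain via a family of diffeomorphisms (the paper's $T_t$ from Assumption \ref{ass:regularity-mu-general}), extends the velocity field to all of $\bR^d$, and takes the flow of the resulting ODE; this is exactly what Theorem \ref{thm:smooth-path-exists} and Lemmas \ref{lem:regularity-vector-field}--\ref{lem:extend-vector-field} do. The one genuine difference is the choice of path from $\underline x$ to $x$. You transport along the straight segment $s \mapsto \underline x + s(x - \underline x)$, which forces the velocity field, the elliptic solution, and the flow to depend smoothly on the full $(d+1)$-dimensional endpoint $x$; the paper instead moves one coordinate at a time ($p_1$, then $p_2$, \dots, then $y$) and composes the resulting maps, invoking the parametrized version (Theorem \ref{thm:smooth-parametrized-path-exists}) with a single scalar parameter at each stage. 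The coordinatewise decomposition is not cosmetic: Assumption \ref{ass:regularity-mu} is stated precisely for one-dimensional coordinate paths parametrized by one other coordinate, and the lemma only claims differentiability \emph{in each variable separately} (the footnote in the definition of $\cQ$ makes clear this is weaker than joint $C^1$), so the paper's route matches both the hypotheses and the conclusion exactly, whereas your segment construction would need a reformulated regularity assumption along arbitrary rays and would be proving a stronger joint-smoothness statement than required. Finally, the boundary degeneracy you flag as the main obstacle is assumed away: Assumption \ref{ass:a14} bounds the density below by $c > 0$ on its support, and the Schauder estimate controls $\nabla u$ up to the boundary, so $\nabla u/\rho$ never blows up; the role of the Neumann condition is only to keep the flow from leaking mass across $\partial\Omega_t$, which is the point you correctly identify at the end.
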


\begin{remark}
In the random demand function constructed in Lemma \ref{lem:smooth-demand-system}, consumers' preference heterogeneity, or ``type,'' is encoded in $\omega \sim \mu_{\bar x}.$
By the third property of $\bar \Phi,$ consumer's type $\omega$ is understood as the demand at $\underline x = (\underline p_1, \dots, \underline p_d, y).$
Observe that the demand system constructed in Lemma \ref{lem:smooth-demand-system} is degenerated in the sense that consumers are completely characterized by the demand at $\underline x.$
More specifically, if an individual demands $\omega$ at $\underline x,$ (s)he demands $\bar \Phi (x, \omega)$ at $x$ for sure.
\end{remark}

Before moving on to the next step, we observe that the random demand function $\bar Q (x) = \bar \Phi (x, \omega)$ is characterized by the solution to an ordinary differential equation (ODE).
Fix $p \in \cP.$
Let $\bar v_x = \bar v_{p, y} : \bR^d \to \bR^d$ be the income derivative of the demand function of consumer $\omega,$
\begin{equation*}
    \bar v_x (q)
    \coloneqq
    D_y \bar \Phi (x, \omega)
    ,
\end{equation*}
where $\bar \Phi (x, \omega) = q.$
Notice that this is well-defined since $\bar \Phi (x, \cdot)$ is homeomorphic.
The family $(\bar v_x)_x$ of vector fields, combined with an initial condition, pins down the demand function, as the Cauchy-Lipschitz theorem implies that $\bar \Psi (y) = \bar \Phi (p, y, \omega) = \bar \Phi (x, \omega)$ is the unique solution to the ODE
\begin{equation} \label{eq:preliminary-ode}
    \begin{cases}
        D_y \bar \Psi (y)
        =
        \bar v_{p, y} (\bar \Psi (y))
        \\
        \bar \Psi (\underline y)
        =
        \bar \Phi (p, \underline y, \omega)
    \end{cases}
\end{equation}
under the global Lipschitz condition on $\bar v_x.$

\bigskip

\noindent
\textbf{Step 2.}
Although $\bar Q$ in Lemma \ref{lem:smooth-demand-system} satisfies the marginal compliance (\ref{eq:marginal-compliance}), it does not satisfy the average Slutsky symmetry condition $\bE [S_{\bar Q} (x)] = \bE [S_{\bar Q} (x)^\prime]$ in general.
Our  strategy is to construct another random demand function $Q$ satisfying both conditions by modifying $\bar Q.$
To do so, we modify the ODE (\ref{eq:preliminary-ode}) so that its solution respects (\ref{eq:slutsky-equivalence}).
More precisely, we rectify $\bar v_x$ by adding an auxiliary vector field $w_x$ constructed in the following lemma.

\begin{lemma} \label{lem:rotate-vector-field}
For $1 \leq i, j \leq d,$ let $x \mapsto a_{i, j} (x)$ be a continuous function on $\cX.$
Assume that $a_{i, j} (x) = - a_{j, i} (x)$ is satisfied for all $i \neq j.$
There exists a vector field $w_x : \bR^d \to \bR^d$ that is Lipschitz uniformly in $x$ and satisfies
\begin{equation} \label{eq:PDE-rotation}
    \begin{cases}
        \nabla
        \cdot
        (\mu_x w_x)
        =
        0
        \text{ in } \Omega_x
        \\
        \ip{\mu_x w_x}{\vn_x}
        =
        0
        \text{ on } \partial \Omega_x
        \\
        \int
            w_{x, i} (q)
            q_j
        d \mu_x (q)
        =
        a_{i, j} (x)
        \text{ for } 1 \leq i, j \leq d
    \end{cases}
    .
\end{equation}
\end{lemma}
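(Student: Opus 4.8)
The plan is to build $w_x$ so that its flux $J_x := \mu_x w_x$ is divergence-free with vanishing normal component; then the flow generated by $w_x$ (with $x$ held fixed as a parameter) preserves $\mu_x$, which is exactly the continuity-equation content of the first two lines of (\ref{eq:PDE-rotation}), while the third line prescribes the cross-moments of this measure-preserving field. I first record the compatibility constraint forced by the first two conditions: for any smooth $f$, integrating $f\,\nabla\cdot J_x = 0$ by parts and using $\ip{J_x}{\vn_x}=0$ on $\partial\Omega_x$ gives $\int \nabla f\cdot J_x\,dq = 0$. Taking $f(q)=q_i q_j$ yields $\int w_{x,i}q_j\,d\mu_x + \int w_{x,j}q_i\,d\mu_x = 0$, i.e. the realizable moments are necessarily antisymmetric; this matches the hypothesis $a_{i,j}=-a_{j,i}$ (and forces $a_{i,i}=0$). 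Thus antisymmetry is not merely assumed but is the precise obstruction, and the task is to show it is the only one.

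The construction uses an antisymmetric matrix-valued potential, the higher-dimensional analogue of a stream function. For an antisymmetric matrix field $q\mapsto A_x(q)\in\bR^{d\times d}$ set
\begin{equation*}
    J_{x,i}(q) := \sum_{l=1}^d \partial_{q_l}(A_x)_{i,l}(q).
\end{equation*}
Then $\nabla\cdot J_x = \sum_{i,l}\partial_{q_i}\partial_{q_l}(A_x)_{i,l}=0$ automatically, since a symmetric second-derivative tensor is contracted against the antisymmetric $A_x$. Integrating by parts (boundary terms to be killed in the next step) gives the moment identity
\begin{equation*}
    \int q_j\,J_{x,i}(q)\,dq = -\int (A_x)_{i,j}(q)\,dq,
\end{equation*}
so prescribing the integrals of the entries of $A_x$ prescribes the target moments. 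I would therefore fix a single nonnegative bump $\phi_x$ with $\int\phi_x=1$ and $\supp\phi_x\subset\Omega_x$, and set $(A_x)_{i,j}(q):= -a_{i,j}(x)\,\phi_x(q)$; this is antisymmetric (using $a_{i,j}=-a_{j,i}$ and $a_{i,i}=0$) and compactly supported in $\Omega_x$.

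With this choice $J_x=\mu_x w_x$ is smooth and compactly supported inside $\Omega_x$, so $\nabla\cdot(\mu_x w_x)=0$ in $\Omega_x$ and $\ip{\mu_x w_x}{\vn_x}=0$ on $\partial\Omega_x$ hold trivially (the flux vanishes near the boundary), while
\begin{equation*}
    \int w_{x,i}(q)\,q_j\,d\mu_x(q) = \int q_j\,J_{x,i}(q)\,dq = -\int (A_x)_{i,j}\,dq = a_{i,j}(x).
\end{equation*}
The field is then $w_x := J_x/\mu_x$, extended by zero off $\supp\phi_x$. The one substantive point—and the main obstacle—is the uniform-in-$x$ Lipschitz bound. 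This forces me to choose the bump family $(\phi_x)_x$ so that its support sits compactly inside the moving region $\Omega_x$ with all spatial derivatives bounded uniformly in $x$, and to guarantee that $\mu_x$ is bounded below on $\supp\phi_x$ uniformly in $x$, so that $w_x=J_x/\mu_x$ and its gradient remain controlled even though $\mu_x$ may degenerate near $\partial\Omega_x$. Both points are exactly where Assumption \ref{ass:regularity-mu} (smooth density, smoothly varying support, nonempty interior $\Omega_x$) does the real work; the continuity of $a_{i,j}$ then transfers to uniform control of $w_x$ across $x$, which is what the subsequent Cauchy–Lipschitz step requires.
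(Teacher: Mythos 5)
Your construction is the paper's own proof written in cleaner notation: the paper sets $\mu_x w_{x,i} = -\sum_{k\ge i}D_k\big(a^{i,k}(x)\psi_x\big) + \sum_{k<i}D_k\big(a^{k,i}(x)\psi_x\big)$, which (using $a^{k,i}=-a^{i,k}$) is exactly your $J_{x,i}=\sum_l \partial_{q_l}(A_x)_{i,l}$ with $(A_x)_{i,l}=-a_{i,l}(x)\psi_x(q)$, and it realizes your bump family explicitly as $\psi_x=\underline\psi\circ T_x^{-1}$ for a fixed bump $\underline\psi$ supported in a ball inside $\Omega_{\underline x}$, so that the uniform Lipschitz bound follows from the uniform $C^{2,\alpha}$ control of $T_x^{-1}$ and the uniform lower bound on the density --- precisely the two points you flagged as the remaining work. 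One substantive remark: your compatibility computation correctly shows that the first two conditions of (\ref{eq:PDE-rotation}) force $a_{i,i}=0$, a hypothesis the lemma does not state; your $A_x$ is antisymmetric (hence $\nabla\cdot J_x=0$) only under that extra condition, and the paper's verification of $\nabla\cdot(\mu_x w_x)=0$ likewise silently drops the diagonal terms $-\sum_i D_i^2\psi_x^{a^{i,i}}$. This is harmless for the paper's application, where $S_{i,i}=\tfrac12 T_{i,i}$ gives $a_{i,i}(x)=S_{i,i}(x)-D_{p_i}\int q_i\,d\mu_x-\tfrac12 D_y\int q_i^2\,d\mu_x=0$, but your observation identifies a hypothesis that should be added to the lemma as stated.
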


For $1 \leq i, j \leq d,$ set
$$
    a_{i, j} (x)
    \coloneqq
    S_{i, j} (x)
    -
    D_{p_j} \int q_i d \mu_x
    -
    \bE [\bar v_{x, i} (\bar Q (x)) \bar Q_j (x)]
    ,
$$
which is a continuous function on $\cX$ under Assumption \ref{ass:regularity-mu}.
Then we have $a_{i, j} (x) + a_{j, i} (x) = 0$ if $i \neq j$ since $S_{i, j} (x) + S_{j, i} (x) = T_{i, j} (x)$ and
\begin{align*}
    \bE [\bar v_{x, i} (\bar Q (x)) \bar Q_j (x)]
    +
    \bE [\bar v_{x, j} (\bar Q (x)) \bar Q_i (x)]
    &=
    \bE \left[
        D_y \bar Q_i (x)
        \cdot
        \bar Q_j (x)
        +
        D_y \bar Q_j (x)
        \cdot
        \bar Q_i (x)
    \right]
    \\
    &=
    D_y \int q_i q_j d \mu_x
    ,
\end{align*}
where the first equality holds by the law of motion of the ODE (\ref{eq:preliminary-ode}), and the second equality holds by the fact that $\bar Q$ satisfies (\ref{eq:marginal-compliance}).
Take a vector field $w_x$ from Lemma \ref{lem:rotate-vector-field} for this $(a_{i, j}),$ and let $v_{p, y} \coloneqq \bar v_{p, y} + w_{p, y}.$
Consider a modified ODE
\begin{equation*}
    \begin{cases}
        D_y \Psi (y)
        =
        v_{p, y} (\Psi (y))
        \\
        \Psi (\underline y)
        =
        \bar \Phi (p, \underline y, \omega)
    \end{cases}
    .
\end{equation*}
By the Cauchy-Lipchitz theorem, this ODE admits a unique solution $\Psi (y) = \Psi_{p, \omega} (y)$ for each $(p, \omega).$
Define
$$
    \Phi (x, \omega)
    =
    \Phi (p, y, \omega) 
    \coloneqq 
    \begin{cases}
        \bar \Phi (p, \underline y, \omega)
        \text{ if }
        y = \underline y
        \\
        \Psi_{p, \omega} (y)
        \text{ if }
        y > \underline y
    \end{cases}
    .
$$
The following lemma shows that this flow induces a random demand function that satisfies the desired requirements.

\begin{lemma} \label{lem:Q-satisfies-marginal-slutsky}
Let $(\mu_x)_{x \in \cX}$ be such that Assumption \ref{ass:regularity-mu}.
The $\cQ$-valued random function $Q (x) \coloneqq \Phi (x, \omega),$ where $\omega \sim \mu_{\underline x},$ satisfies (\ref{eq:marginal-compliance}) and (\ref{eq:slutsky-equivalence}).
\end{lemma}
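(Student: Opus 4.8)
The plan is to verify the two requirements (\ref{eq:marginal-compliance}) and (\ref{eq:slutsky-equivalence}) separately, with the marginal compliance (\ref{eq:marginal-compliance}) being the substantive part and the Slutsky equivalence (\ref{eq:slutsky-equivalence}) following by mostly algebraic bookkeeping once (\ref{eq:marginal-compliance}) is in hand. First I would record that $v_{p, y} = \bar v_{p, y} + w_{p, y}$ is Lipschitz uniformly in $x$, since $\bar v$ is Lipschitz by the construction behind (\ref{eq:preliminary-ode}) and $w$ is Lipschitz uniformly in $x$ by Lemma \ref{lem:rotate-vector-field}. This guarantees a unique solution $\Psi_{p, \omega}$ of the modified ODE that depends continuously differentiably on $(p, \omega)$ by smooth dependence of ODE flows on parameters and initial data. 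Together with the fact that the support of $\mu_x$ lies in the budget set, this will yield $\Phi(\cdot, \omega) \in \cQ$ for $\mu_{\underline x}$-almost every $\omega$, so that $Q$ is genuinely $\cQ$-valued.

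For (\ref{eq:marginal-compliance}) the key is to read the income flow through the continuity equation. Writing $\rho_x$ for the density of $\mu_x$ guaranteed by Assumption \ref{ass:regularity-mu}, the preliminary demand $\bar Q$ of Lemma \ref{lem:smooth-demand-system} transports $\mu_{\underline x}$ into $\mu_x$ along the income direction, so for each fixed $p$ the density $\rho_{p, y}$ solves $\partial_y \rho_{p, y} + \nabla \cdot (\rho_{p, y} \bar v_{p, y}) = 0$ with initial datum $\rho_{p, \underline y}$ at $y = \underline y$. Adding the divergence-free identity $\nabla \cdot (\mu_x w_x) = 0$ from (\ref{eq:PDE-rotation}) shows that the very same $\rho_{p, y}$ also solves $\partial_y \rho_{p, y} + \nabla \cdot (\rho_{p, y} v_{p, y}) = 0$, again with initial datum $\rho_{p, \underline y}$. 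On the other hand, the law of $Q(p, y) = \Psi_{p, \omega}(y)$ is exactly the pushforward of $\mu_{p, \underline y}$ under the flow of $v_{p, \cdot}$ from $\underline y$ to $y$, so its density solves the same transport equation with the same initial condition; here the no-flux boundary condition $\ip{\mu_x w_x}{\vn_x} = 0$ on $\partial \Omega_x$ is what prevents the added field $w$ from pushing mass across the moving support boundary, so that the two evolutions agree on all of $\Omega_x$ and conserve total mass. Invoking uniqueness of the continuity equation for the Lipschitz field $v_{p, \cdot}$ then forces the law of $Q(p, y)$ to equal $\mu_{p, y}$ for every $y$, which is (\ref{eq:marginal-compliance}). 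I expect this uniqueness-cum-moving-boundary argument to be the main obstacle, since it is where Assumption \ref{ass:regularity-mu} and all three conditions of Lemma \ref{lem:rotate-vector-field} are used simultaneously, and where care is needed to justify the transport characterization of the pushforward density.

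Granting (\ref{eq:marginal-compliance}), I would establish (\ref{eq:slutsky-equivalence}) entrywise. For the price-derivative part, the $C^1$ dependence of $\Phi$ on $p_j$ together with compactness of $\cX$ lets me interchange differentiation and expectation by dominated convergence, so marginal compliance gives $\bE[D_{p_j} Q_i(x)] = D_{p_j} \int q_i \, d\mu_x$. For the income part, the modified ODE gives $D_y Q_i(x) = v_{x, i}(Q(x))$, whence $\bE[D_y Q_i(x) Q_j(x)] = \bE[v_{x, i}(Q(x)) Q_j(x)]$; splitting $v = \bar v + w$ and using (\ref{eq:marginal-compliance}) to evaluate both expectations as integrals against $\mu_x$, the $w$-term equals $\int w_{x, i}(q) q_j \, d\mu_x = a_{i, j}(x)$ by the third line of (\ref{eq:PDE-rotation}), while the $\bar v$-term equals $\bE[\bar v_{x, i}(\bar Q(x)) \bar Q_j(x)]$ because $Q(x)$ and $\bar Q(x)$ share the law $\mu_x$. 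Substituting the definition of $a_{i, j}(x)$ then telescopes these contributions so that $(\bE[S_Q(x)])_{i, j} = D_{p_j}\int q_i \, d\mu_x + S_{i, j}(x) - D_{p_j}\int q_i \, d\mu_x = S_{i, j}(x)$, which is (\ref{eq:slutsky-equivalence}) and completes the proof.
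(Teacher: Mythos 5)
Your proposal is correct and follows essentially the same route as the paper: marginal compliance via the continuity equation, using that the divergence-free, no-flux field $w_x$ is invisible to the marginals and invoking uniqueness of the continuity equation for Lipschitz velocity fields, and then the entrywise telescoping computation with $\int w_{x,i} q_j \, d\mu_x = a_{i,j}(x)$ for (\ref{eq:slutsky-equivalence}). The only cosmetic difference is that you phrase the transport argument at the level of densities while the paper works with the weak (test-function) formulation, but the substance and the appeals to Lemma \ref{lem:rotate-vector-field} are identical.
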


To sum up, for given demand distributions $(\mu_x)_{x \in \cX}$ and function $S \in \cS_d,$ there exists a random demand function $Q,$ constructed in Lemma \ref{lem:Q-satisfies-marginal-slutsky}, such that $Q (x) \sim \mu_x$ and $\bE [Q (x)] = S (x)$ hold.
This is a key component of Theorem \ref{thm:identified-set-average-slutsky}.

\section{Conclusion}

In this paper, we have shown that the average Slutsky symmetry is not testable using cross-sectional demand data.
To establish this, we explicitly derived the identified set of the average Slutsky matrix and demonstrated that it always contains a symmetric matrix.
This finding implies that individual Slutsky symmetry cannot be tested through the average Slutsky matrix, although it does not rule out the possibility of testing the symmetry hypothesis without relying on the average.
See \cite{gunsilius2025nonparametric} for recent developments.
Furthermore, by imposing bounds on the income elasticity of demand, we showed that average Slutsky symmetry leads to a set of functional inequality constraints that are, in principle, testable.
A promising direction for future research is to develop a statistical test for Slutsky symmetry based on these inequalities.

\appendix

\section{Preliminaries}

Let $(\mu_t)_{t \in [0, 1]}$ be a path of probability measures on $\bR^d.$
\begin{assumption} \label{ass:regularity-mu-general} \mbox{}
\begin{enumerate}[label=A.1.\arabic*]
    \item $\supp (\mu_t) = \bar \Omega_t$ where $\Omega_t$ is a bounded $C^{2, \alpha}$-domain. \label{ass:a11}
    \item $T_t : \bar \Omega_0 \to \bar \Omega_t$ is a $C^2$-diffeomorphism satisfying $\sup_{t \in [0, 1]} \norm{T_t^{-1}}_{C^{2, \alpha} (\bar \Omega_t)} < \infty.$ \label{ass:a12}
    \item $\mu_t$ has density $\rho_t$ with respect to Lebesgue measure, $\rho_t \mid_{\bar \Omega_t} \in C^{1, \alpha} (\bar \Omega_t),$ and $\sup_{t \in [0, 1]} \norm{\rho_t}_{C^{1, \alpha} (\bar \Omega_t)} < \infty.$ \label{ass:a13}
    \item There is $c > 0$ such that 
    $
        \inf_{t \in [0, 1], x \in \Omega_t}
        \rho_t (x)
        >
        c
        .
    $ \label{ass:a14}
    \item For each $t \in [0, 1]$ and $x \in \Omega_t,$ $(t - \varepsilon, t + \varepsilon) \ni s \mapsto \rho_s (x)$ is in $C^1$ for small $\varepsilon > 0.$ \label{ass:a15}
    \item For each $t \in [0, 1],$ $x \mapsto \partial_t \rho_t (x)$ is in $C^{0, \alpha} (\bar \Omega_t),$ and $\sup_{t \in [0, 1]} \norm{\partial_t \rho_t}_{C^{0, \alpha} (\bar \Omega_t)} < \infty.$ \label{ass:a16}
    \item $\sup_{t \in [0, 1]} \norm{\nabla \rho_t}_{C^0 (\bar \Omega_t)} < \infty.$ \label{ass:a17}
    \item For $\tilde f_t (x) \coloneqq \partial_t \rho_t \circ T_t (x)$ and $A_t (x) \coloneqq (D T_t (x))^{-1},$
    $
        \lim_{\varepsilon \to 0} \norm{\tilde f_{t + \varepsilon} - \tilde f_t}_{C^{0, \alpha} (\bar \Omega_0)} = 0
        ,
    $
    and
    $
        \lim_{\varepsilon \to 0} \norm{A_{t + \varepsilon} - A_t}_{C^{1, \alpha} (\bar \Omega_0)} = 0
    $
    hold. \label{ass:a18}
\end{enumerate}    
\end{assumption}

\begin{theorem} \label{thm:smooth-path-exists}
If $(\mu_t)_{t \in [0, 1]}$ satisfies Assumption \ref{ass:regularity-mu-general}, then there exists a map $\Psi : [0, 1] \times \bR^d \to \bR^d$ such that it is continuously differentiable in both arguments on $[0, 1] \times \Omega_0$ and $\Psi (t, U) \sim \mu_t$ where $U$ is a random variable drawn from $\mu_0.$
\end{theorem}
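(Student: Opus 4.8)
The plan is to realize $\Psi$ as the flow of a suitable time-dependent velocity field and to let the continuity equation handle the bookkeeping of the marginals. Concretely, I would seek a vector field $v_t$ on $\Omega_t$ and define $\Psi$ as the unique solution of the ODE $\partial_t\Psi(t,x)=v_t(\Psi(t,x))$ with $\Psi(0,x)=x$. Differentiating $\int\varphi\,d(\Psi(t,\cdot)_\#\mu_0)$ against a test function and comparing with $\frac{d}{dt}\int\varphi\,d\mu_t$ (accounting for the fact that $\Omega_t$ itself moves) shows that $\Psi(t,\cdot)_\#\mu_0=\mu_t$ holds for all $t$ precisely when $v_t$ satisfies the continuity equation $\partial_t\rho_t+\nabla\cdot(\rho_t v_t)=0$ in $\Omega_t$ together with the boundary condition $\ip{v_t}{\vn_t}=V_t$ on $\partial\Omega_t$, where $V_t$ is the outward normal speed of the moving boundary. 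The whole problem thus reduces to producing one such field with enough regularity to run Cauchy--Lipschitz.

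To construct $v_t$ I would first read off the boundary speed $V_t$ from the prescribed motion of the supports: writing $b_t:=\partial_t T_t\circ T_t^{-1}$ for the Eulerian velocity generating the family $T_t$ (so that $t\mapsto T_t(x)$ is an integral curve of $b_t$, using that $T_t$ depends smoothly on $t$), the field $b_t$ already carries $\bar\Omega_0$ onto $\bar\Omega_t$ and satisfies $\ip{b_t}{\vn_t}=V_t$ on $\partial\Omega_t$. It then remains only to correct the density, so I would look for $v_t=b_t+\rho_t^{-1}\nabla u_t$ with $u_t$ solving the Neumann problem
\[
    \Delta u_t=-\bigl(\partial_t\rho_t+\nabla\cdot(\rho_t b_t)\bigr)
    \ \text{ in }\Omega_t,
    \qquad
    \ip{\nabla u_t}{\vn_t}=0
    \ \text{ on }\partial\Omega_t.
\]
The correction $\rho_t^{-1}\nabla u_t$ is tangential at the boundary, so $v_t$ retains the correct normal speed $V_t$, and the solvability (compatibility) condition $\int_{\Omega_t}(\partial_t\rho_t+\nabla\cdot(\rho_t b_t))=0$ is exactly the conservation of total mass $\int_{\Omega_t}\rho_t\equiv1$, which I would verify by the transport theorem on the moving domain.

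The hard part will be the regularity, and it is precisely what Assumption~\ref{ass:regularity-mu-general} is tailored to deliver. I would invoke Schauder estimates for the Neumann problem on the $C^{2,\alpha}$-domain $\Omega_t$: since the right-hand side lies in $C^{0,\alpha}(\bar\Omega_t)$ (from \ref{ass:a13}, \ref{ass:a16}, \ref{ass:a17} and the regularity of $b_t$) one obtains $u_t\in C^{2,\alpha}(\bar\Omega_t)$, hence $\nabla u_t\in C^{1,\alpha}$; dividing by $\rho_t$, which is bounded below by \ref{ass:a14} and lies in $C^{1,\alpha}$, keeps $v_t$ in $C^{1,\alpha}$, in particular Lipschitz in space with a constant uniform in $t$ by the uniform bounds in the assumption. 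The more delicate point is continuity of $t\mapsto v_t$, needed to upgrade the flow from merely absolutely continuous to genuinely $C^1$ in $t$: here I would pull the Neumann problem back to the fixed domain $\Omega_0$ through $T_t$, so that its coefficients are built from $A_t=(DT_t)^{-1}$ and its data from $\tilde f_t=\partial_t\rho_t\circ T_t$, and then use the continuity statements in \ref{ass:a18}, together with continuous dependence of solutions of elliptic problems on their coefficients, to conclude $\norm{v_{t+\varepsilon}-v_t}\to0$.

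Once $v_t$ is Lipschitz in space uniformly in $t$ and continuous in $t$, Cauchy--Lipschitz produces a unique flow $\Psi$, which is $C^1$ in $t$ because $v_t(\Psi(t,\cdot))$ is continuous in $t$ and $C^1$ in $x$ by the standard smooth-dependence-on-initial-conditions theory (using that $v_t$ is $C^1$ in space). The normal-speed matching $\ip{v_t}{\vn_t}=V_t$ forces the flow to carry $\partial\Omega_0$ onto $\partial\Omega_t$, hence $\Omega_0$ onto $\Omega_t$. Finally, to check $\Psi(t,\cdot)_\#\mu_0=\mu_t$ I would set $\sigma_t:=\Psi(t,\cdot)_\#\mu_0$ and observe that, by construction, $\sigma_t$ and $\mu_t$ solve the same continuity equation with the same velocity field and the same initial datum $\sigma_0=\mu_0$; uniqueness of solutions to the continuity equation driven by a Lipschitz field then yields $\sigma_t=\mu_t$ for every $t$, completing the argument.
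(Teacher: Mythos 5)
Your proposal follows the same architecture as the paper's proof: build a time-dependent velocity field by solving a Neumann problem on $\Omega_t$, get uniform-in-$t$ Lipschitz bounds and time-continuity from Schauder theory (pulling the elliptic problem back to the fixed domain $\Omega_0$ via $T_t$ and using \ref{ass:a18}), run Cauchy--Lipschitz, and identify $\Psi(t,\cdot)_\#\mu_0$ with $\mu_t$ by uniqueness for the continuity equation. The one genuine difference is the velocity field itself. The paper takes $v_t=\rho_t^{-1}\nabla u_t$ with $\Delta u_t=-\partial_t\rho_t$ and homogeneous Neumann data, so its field is purely tangential at $\partial\Omega_t$; you instead decompose $v_t=b_t+\rho_t^{-1}\nabla u_t$ with $b_t=\partial_t T_t\circ T_t^{-1}$ carrying the moving boundary and $u_t$ solving the Poisson problem with source $-(\partial_t\rho_t+\nabla\cdot(\rho_t b_t))$. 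Your version is the more robust one when the supports genuinely move: it gets the normal boundary speed and the compatibility condition $\int_{\Omega_t}(\partial_t\rho_t+\nabla\cdot(\rho_t b_t))=0$ right via the transport theorem, whereas the paper's compatibility condition $\int_{\Omega_t}\partial_t\rho_t=0$ and its identity $\tfrac{d}{dt}\int_{\Omega_t}\psi\,d\mu_t=\int_{\Omega_t}\psi\,\partial_t\rho_t$ silently drop the boundary flux term $\int_{\partial\Omega_t}\psi\rho_t V_t$, which is nonzero in general when $\partial\Omega_t$ moves (note $\rho_t$ is bounded below on $\bar\Omega_t$ by \ref{ass:a14}). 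The price you pay is that $b_t$ requires $t\mapsto T_t$ to be differentiable in time, with $\partial_t T_t$ regular enough in space to enter the Schauder estimate and the uniform Lipschitz bound; Assumption \ref{ass:regularity-mu-general} controls $T_t^{-1}$, $A_t=(DT_t)^{-1}$, and their time-continuity (\ref{ass:a12}, \ref{ass:a18}) but nowhere grants time-differentiability of $T_t$, so your ``Eulerian velocity'' is not licensed by the stated hypotheses and you would need to add that regularity explicitly. Apart from this, your appeal to standard smooth dependence on initial conditions for spatial $C^1$-ness of the flow is where the paper does real work (a variational equation plus a Gronwall argument exploiting the uniform H\"older continuity of $Dv_t$), but your outline is consistent with that argument.
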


\begin{proof}
For each $t \in [0, 1],$ define $f_t : \Omega_t \to \bR$ as $f_t (x) \coloneqq \partial_t \rho_t (x).$
Note that $f_t \in C^{0, \alpha} (\bar \Omega_t)$ by (\ref{ass:a16}).
Consider the following Poisson equation with a Neumann boundary condition:
\begin{align*}
    \begin{cases}
        \Delta u
        =
        - f_t
        \text{ in }
        \Omega_t
        \\
        \ip{\nabla u}{\vn_t}
        =
        0
        \text{ on }
        \partial \Omega_t
    \end{cases}
    ,
\end{align*}
where $\vn_t : \partial \Omega_t \to \bR^d$ is the outward unit normal vector on $\partial \Omega_t.$
This PDE admits a unique solution $u = u_t \in C^{2, \alpha} (\bar \Omega_t)$ such that $\int_{\Omega_t} u_t = 0$ by \cite{nardi2015schauder}.
Define a vector field $v_t : \bar \Omega_t \to \bR^d$ as $v_t (x) \coloneqq \nabla u_t (x) / \rho_t (x).$
We first establish the regularity of $v_t.$
\begin{lemma} \label{lem:regularity-vector-field}
The vector field $v_t : \Omega_t \to \bR^d$
\begin{enumerate}
    \item is Lipschitz continuous uniformly over $t,$ that is,
    $$
        \sup_{t \in [0, 1]}
        \sup_{x, y \in \Omega_t, x \neq y}
        \frac{\norm{v_t (x) - v_t (y)}}{\norm{x - y}}
        <
        \infty
        ,
    $$
    and
    \item is differentiable with a H\"older continuous derivative uniformly over $t,$ that is,
    $$
        \sup_{t \in [0, 1]}
        \sup_{x, y \in \Omega_t, x \neq y}
        \frac{\norm{D v_t (x) - D v_t (y)}}{\norm{x - y}^\beta}
        <
        \infty
    $$
    for some $\beta \in (0, 1].$
\end{enumerate}
Moreover, for $t \in [0, 1]$ and $x \in \Omega_t,$ the map $[(t - \varepsilon) \vee 0, (t + \varepsilon) \wedge 1] \ni s \mapsto D v_s (x),$ which is well-defined for small $\varepsilon > 0,$ is continuous at $t.$
\end{lemma}

\begin{lemma} \label{lem:extend-vector-field} 
There exists a continuous vector field $\bar v : [0, 1] \times \bR^d \ni (t, x) \mapsto \bar v_t (x) \in \bR^d$ such that it is Lipschitz in $x$ uniformly over $t$ and $\bar v \mid_{\{t\} \times \bar \Omega_t} = v$ for $t \in [0, 1].$
\end{lemma}

For simplicity, the extension $\bar v$ is also denoted by $v.$
We observe that $(\mu_t, v_t)$ solves the continuity equation 
$$
    \partial_t \mu_t
    +
    \nabla \cdot (\mu_t v_t)
    =
    0
$$
in the weak sense (see Chapter 4 of \cite{santambrogio2015optimal}), i.e, for $\psi \in C_c^1 (\bR^d),$ the map $t \mapsto \int \psi d \mu_t$ is absolutely continuous and it holds
$$
    \frac{d}{d t}
    \int_{\bR^d}
        \psi
    d \mu_t
    =
    \int_{\bR^d}
        \ip{\nabla \psi}{v_t}
    d \mu_t
    .
$$
Indeed, the fist condition follows because the map is an antiderivative of $t \mapsto \int \psi f_t,$ and the second holds because
$$
    \int_{\bR^d}
        \ip{\nabla \psi}{v_t}
    d \mu_t
    =
    \int_{\Omega_t}
        \ip{\nabla \psi}{\nabla u_t}
    =
    -
    \int_{\Omega_t}
        \psi
        \Delta u_t
    =
    \int_{\Omega_t}
        \psi
        f_t
    =
    \frac{d}{d t}
    \int_{\Omega_t}
        \psi
    d \mu_t
    ,
$$
where the second equality holds since the boundary integral vanishes due to the Neumann condition.

Consider the ODE
\begin{equation} \label{eq:ode}
    \frac{d}{d t}
    X_t
    =
    v_t (X_t)
    , \ 
    X_0
    =
    x
    .
\end{equation}
By Lemma \ref{lem:extend-vector-field} and the Cauchy-Lipchitz theorem, there exists a unique solution $t \mapsto \Psi (t, x)$ that is in $C^1 ([0, 1]),$ and its flow $\bR^d \ni x \mapsto \Psi (t, x) \in \bR^d$ is a homeomorphism.

Let $\tilde \mu_t \coloneq \Psi (t, \cdot)_\# \mu_0 \in \cP (\bR^d).$
Then, $(\tilde \mu_t, v_t)$ satisfies the continuity equation in the weak sense by the standard argument.
By the uniqueness of the solution of the continuity equation (Theorem 4.4 of \cite{santambrogio2015optimal}), we have $\mu_t = \tilde \mu_t.$
Hence, for a fixed random variable $U \sim \mu_0,$ the random process $t \mapsto \Psi (t, U)$ has a continuously differentiable sample path and $\Psi (t, U) \sim \mu_t.$

We finally show that $x \mapsto \Psi (t, x)$ is continuously differentiable.
Recall that $\Psi (t, \cdot) \mid_{\bar \Omega_0}$ is a homeomorphism between $\bar \Omega_0$ and $\bar \Omega_t.$
Let $x \in \Omega_0.$ 
Since $[0, 1] \ni t \mapsto D v_t (\Psi (x, t))$ is continuous by Lemma \ref{lem:regularity-vector-field}, there exists a unique matrix-valued valued function $Z : [0, 1] \times \Omega_0 \to \bR^{d \times d}$ that satisfies the linear ODE
\begin{align*}
    \frac{d}{d t} Z (t, x) = D v_t (\Psi (x, t)) Z (t, x)
    , \
    Z (0, x) = I
\end{align*}
(Theorem 3.9 of \cite{teschl2012ordinary}).
Let $x_0 \in \Omega_0.$
Letting 
\begin{align*}
    \psi_t (x) 
    \coloneqq 
    \Psi (t, x)
    - 
    \Psi (t, x_0)
    - 
    Z (t, x_0) (x - x_0)
\end{align*}
for $x \in \Omega_0,$ we observe that
\begin{align*}
    \frac{d}{d t}
    \psi_t (x)
    &=
    v_t (\Psi (t, x))
    -
    v_t (\Psi (t, x_0))
    -
    D v_t (\Psi (x_0, t)) Z (t, x_0) (x - x_0)
    \\
    &=
    D v_t (\Psi (t, x_0)) \psi_t (x)
    \\
    &\hspace{1.2cm}+
    \left(
        \int_0^1
            \left(
                D v_t ((1 - h) \Psi (t, x_0) + h \Psi (t, x))
                -
                D v_t (t, x_0)
            \right)
        d h
    \right)
    (\Psi (t, x) - \Psi (t, x_0))
    .
\end{align*}
(Notice that $(1 - h) \Psi (t, x_0) + h \Psi (t, x) \in \Omega_t$ for $x$ close enough to $x_0.$)
Since $\psi_0 (x) = 0,$ we have
\begin{align*}
    \norm{
        \psi_t (x)
    }
    &\leq
    \int_0^t
        \norm{D v_s (\Psi (s, x_0))}
        \norm{\psi_s (x)}
    ds
    \\
    &\hspace{1.2cm}+
    \int_0^1
        \sup_{h \in [0, 1]} 
        \norm{D v_s ((1 - h) \Psi (s, x_0) + h \Psi (s, x)) - D v_s (s, x_0)}
    d s
    \cdot
    \text{Lip} (\Psi (t, \cdot))
    \norm{x - x_0}
    .
\end{align*}
By Gronwall's inequality, we obtain
\begin{align*}
    \norm{
        \psi_t (x)
    }
    &\leq
    \sup_{s \in [0, 1], h \in [0, 1]}
    \norm{D v_s ((1 - h) \Psi (s, x_0) + h \Psi (s, x)) - D v_s (s, x_0)}
    \cdot
    \text{Lip} (\Phi (t, \cdot))
    \norm{x - x_0}
    \\
    &\hspace{1.2cm}\cdot
    \exp \left(
        \int_0^1
            \norm{D v_s (\Psi (s, x_0))}
        d s 
    \right)
    \\
    &\leq
    C
    \norm{x - x_0}^{1 + \beta}
\end{align*}
for some universal constant $C > 0,$ because $\Phi (t, \cdot)$ is Lipschitz by Theorem 2.8 of \cite{teschl2012ordinary} and $D v_s$ is uniformly H\"older by Lemma \ref{lem:regularity-vector-field}.
Hence, $\norm{\psi_t (x)} / \norm{x - x_0} \to 0$ as $x \to x_0,$ which immediately implies $\Psi (t, \cdot)$ is differentiable at $x_0$ with derivative
\begin{align*}
    \frac{\partial}{\partial x}
    \Psi (t, x_0)
    =
    Z (t, x_0)
    =
    \exp \left(
        \int_0^t
            D v_s (\Psi (s, x_0))
        ds
    \right)
    ,
\end{align*}
which is continuous in $x_0.$
\end{proof}

Next, we investigate whether the random function constructed in Theorem \ref{thm:smooth-path-exists} depends on the given path of probability measure smoothly.
Let $\Theta \subset \bR$ be an open interval.
Let $(\mu_t^\theta)_{t \in [0, 1], \theta \in \Theta}$ be a parametrized family of paths of probability measures on $\bR^d.$
\begin{assumption} \label{ass:regularity-parametrized-mu-general} \mbox{}
\begin{enumerate}[label=A.2.\arabic*]
    \item For each $\theta \in \Theta,$ the path $(\mu_t^\theta)_{t \in [0, 1]}$ satisfies Assumption \ref{ass:regularity-mu-general}. \label{ass:a21}
    \item $\theta \mapsto \rho_t^\theta (x)$ is continuously differentiable. \label{ass:a22}
    \item Let $\tilde f_t^\theta (x) \coloneqq \partial_t \rho_t^\theta \circ T_t^\theta (x)$ and $A_t^\theta (x) \coloneqq (D T_t^\theta (x))^{-1}.$ There exist functions $D_\theta \tilde f_t^\theta$ and $D_\theta A_t^\theta$ such that
    $$
        \lim_{\varepsilon \to 0} \norm{\frac{\tilde f_t^{\theta + \varepsilon} - \tilde f_t^\theta}{\varepsilon} - D_\theta \tilde f_t^\theta}_{C^{0, \alpha} (\bar \Omega_0)} = 0
        ,
        \quad
        \lim_{\varepsilon \to 0} \norm{\frac{A_t^{\theta + \varepsilon} - A_t^\theta}{\varepsilon} - D_\theta A_t^\theta}_{C^{1, \alpha} (\bar \Omega_0)} = 0
        ,
    $$
    $$
        \lim_{\varepsilon \to 0} \norm{D_\theta \tilde f_t^{\theta + \varepsilon} - D_\theta \tilde f_t^\theta}_{C^{0, \alpha} (\bar \Omega_0)} = 0
        ,
        \quad
        \lim_{\varepsilon \to 0} \norm{D_\theta A_t^{\theta + \varepsilon} - D_\theta A_t^\theta}_{C^{1, \alpha} (\bar \Omega_0)} = 0
        ,
    $$
    $$
        \lim_{\varepsilon \to 0} \norm{D_\theta \tilde f_{t + \varepsilon}^\theta - D_\theta \tilde     f_t^\theta}_{C^{0, \alpha} (\bar \Omega_0)} = 0
        ,
        \quad
        \text{and}
        \quad
        \lim_{\varepsilon \to 0} \norm{D_\theta A_{t + \varepsilon}^\theta - D_\theta A_t^\theta}_{C^{1, \alpha} (\bar \Omega_0)} = 0
    $$
    hold. \label{ass:a23}
\end{enumerate}    
\end{assumption}

\begin{theorem} \label{thm:smooth-parametrized-path-exists}
Suppose that $(\mu_t^\theta)_{t \in [0, 1], \theta \in \Theta}$ satisfies Assumption \ref{ass:regularity-parametrized-mu-general}.
For each $\theta \in \Theta,$ let $\Psi (\cdot, \theta, \cdot) : [0, 1] \times \bR^d \to \bR^d$ be the map stated in Theorem \ref{thm:smooth-path-exists} for $(\mu_t^\theta)_{t \in [0, 1]}.$
Then the map $\Theta \ni \theta \mapsto \Psi (t, \theta, x)$ is continuously differentiable for each $(t, x) \in (0, 1) \times \Omega_0.$
\end{theorem}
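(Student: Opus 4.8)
Recall from the proof of Theorem~\ref{thm:smooth-path-exists} that, for each $\theta,$ the flow $\Psi (\cdot, \theta, \cdot)$ solves the ODE $\frac{d}{dt} X_t = v_t^\theta (X_t),$ $X_0 = x,$ where $v_t^\theta = \nabla u_t^\theta / \rho_t^\theta$ and $u_t^\theta \in C^{2, \alpha} (\bar \Omega_t^\theta)$ is the unique mean-zero solution of the Neumann--Poisson problem $\Delta u = - f_t^\theta$ in $\Omega_t^\theta,$ $\ip{\nabla u}{\vn_t^\theta} = 0$ on $\partial \Omega_t^\theta,$ with $f_t^\theta = \partial_t \rho_t^\theta.$ Since the initial condition $\Psi (0, \theta, x) = x$ carries no dependence on $\theta,$ the plan is to (i) show that $\theta \mapsto v_t^\theta$ is continuously differentiable, uniformly in $(t, x),$ and then (ii) invoke the classical theorem on differentiable dependence of ODE flows on parameters.

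\textbf{Step (i): $C^1$ dependence of the vector field on $\theta.$} The obstacle here is that the domain $\Omega_t^\theta$ on which the Poisson problem is posed itself varies with $\theta.$ I would remove this by pulling everything back to the fixed reference domain $\Omega_0$ through the diffeomorphism $T_t^\theta.$ Writing $\hat u_t^\theta \coloneqq u_t^\theta \circ T_t^\theta,$ the Laplacian and the Neumann condition transform into a divergence-form elliptic operator $L_t^\theta$ and a co-normal boundary operator $B_t^\theta$ on $\Omega_0$ whose coefficients are algebraic expressions in $A_t^\theta = (D T_t^\theta)^{-1}$ and its first derivatives, while the data becomes $\tilde f_t^\theta = \partial_t \rho_t^\theta \circ T_t^\theta.$ Thus $\hat u_t^\theta$ solves a single-domain, $\theta$-parametrized elliptic problem
\begin{equation*}
    L_t^\theta \hat u_t^\theta = - \tilde f_t^\theta \text{ in } \Omega_0, \qquad B_t^\theta \hat u_t^\theta = 0 \text{ on } \partial \Omega_0, \qquad \int_{\Omega_0} \hat u_t^\theta = 0.
\end{equation*}
By Assumption~\ref{ass:a23}, $\theta \mapsto A_t^\theta$ and $\theta \mapsto \tilde f_t^\theta$ are continuously differentiable in the $C^{1, \alpha} (\bar \Omega_0)$ and $C^{0, \alpha} (\bar \Omega_0)$ norms, so the coefficients of $L_t^\theta, B_t^\theta$ and the data are $C^1$ in $\theta$ in Schauder norms. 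Differentiating formally in $\theta,$ the candidate derivative $\dot u_t^\theta \coloneqq D_\theta \hat u_t^\theta$ should solve the linear problem obtained by the product rule,
\begin{equation*}
    L_t^\theta \dot u_t^\theta = - D_\theta \tilde f_t^\theta - (D_\theta L_t^\theta) \hat u_t^\theta \text{ in } \Omega_0, \qquad B_t^\theta \dot u_t^\theta = - (D_\theta B_t^\theta) \hat u_t^\theta \text{ on } \partial \Omega_0,
\end{equation*}
together with $\int_{\Omega_0} \dot u_t^\theta = 0.$ Since the solvability (compatibility) condition is inherited by the differentiated problem, this admits a unique mean-zero solution in $C^{2, \alpha} (\bar \Omega_0)$ by the same Schauder theory used in Theorem~\ref{thm:smooth-path-exists} (\cite{nardi2015schauder}). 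To promote this from a formal computation to genuine differentiability, I would apply the uniform Schauder estimate to the quantity $(\hat u_t^{\theta + \varepsilon} - \hat u_t^\theta) / \varepsilon - \dot u_t^\theta$: its defining data tends to $0$ in $C^{0, \alpha}$ as $\varepsilon \to 0$ by Assumption~\ref{ass:a23}, so the difference quotient converges to $\dot u_t^\theta$ in $C^{2, \alpha} (\bar \Omega_0).$ Pushing forward by $T_t^\theta$ and dividing by $\rho_t^\theta$ then yields that $\theta \mapsto v_t^\theta$ is $C^1$ with derivative continuous in $(t, x),$ the continuity in $t$ coming from the last line of Assumption~\ref{ass:a23}.

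\textbf{Step (ii): parameter dependence of the flow.} With $v_t^\theta$ continuous in $(t, x, \theta),$ Lipschitz in $x$ uniformly in $(t, \theta)$ (Lemmas~\ref{lem:regularity-vector-field} and~\ref{lem:extend-vector-field}), and continuously differentiable in both $x$ (Lemma~\ref{lem:regularity-vector-field}) and $\theta$ (Step (i)) with jointly continuous partials, the classical theorem on smooth dependence of solutions on parameters (\cite{teschl2012ordinary}) applies on $(0, 1) \times \Omega_0.$ It gives that $\theta \mapsto \Psi (t, \theta, x)$ is continuously differentiable, and that $\Xi (t) \coloneqq D_\theta \Psi (t, \theta, x)$ is the unique solution of the variational equation
\begin{equation*}
    \frac{d}{dt} \Xi (t) = D_x v_t^\theta (\Psi (t, \theta, x)) \, \Xi (t) + D_\theta v_t^\theta (\Psi (t, \theta, x)), \qquad \Xi (0) = 0,
\end{equation*}
which depends continuously on $(t, \theta, x).$ This proves the claim. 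The crux of the argument is Step (i): differentiating the elliptic solution in $\theta$ in the presence of a moving domain. Transporting to $\Omega_0$ converts domain variation into coefficient variation, and Assumption~\ref{ass:regularity-parametrized-mu-general} is tailored precisely so that the pulled-back coefficients and data are $C^1$ in $\theta$ in the Schauder norms needed to run the difference-quotient argument; the delicate point is upgrading formal differentiation to convergence in the strong $C^{2, \alpha}$ norm, which is exactly what the uniform Schauder estimates deliver.
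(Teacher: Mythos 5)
Your proposal follows essentially the same route as the paper: Step (i) reproduces the content and proof strategy of the paper's Lemma \ref{lem:regularity-parametrized-vector-field} (pull back the Neumann--Poisson problem to $\Omega_0$ via $T_t^\theta$, differentiate the resulting coefficient-parametrized elliptic problem in $\theta$, and upgrade the formal derivative via uniform Schauder estimates and Assumption \ref{ass:a23}), and Step (ii) matches the paper's use of the variational equation $\frac{d}{dt} Z = D_x v_t^\theta (\Psi) Z + D_\theta v_t^\theta (\Psi)$, $Z(0) = 0$, together with continuity in $\theta$ and a Gronwall argument to identify $Z$ with $D_\theta \Psi$. The argument is correct and no substantive differences from the paper's proof are present.
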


\begin{proof}
By Theorem \ref{thm:smooth-path-exists}, $\Psi (\cdot, \theta, \cdot)$ satisfies
\begin{equation*}
    \frac{d}{d t}
    \Psi (t, \theta, x)
    =
    v_t^\theta (\Psi (t, \theta, x))
    , \ 
    \Psi (0, \theta, x)
    =
    x
    ,
\end{equation*}
where $v_t^\theta$ is the extension of $\nabla u_t^\theta / \rho_t^\theta$ by Lemma \ref{lem:extend-vector-field}.
We first show that $(t, \theta, x) \mapsto v_t^\theta (x)$ is smooth in the following sense.
\begin{lemma} \label{lem:regularity-parametrized-vector-field}
The vector field $v_t^\theta (x)$ is differentiable in $\theta$ and $x,$ and
\begin{enumerate}
    \item $t \mapsto D v_t^\theta (x)$ is continuous,
    \item $t \mapsto D_\theta v_t^\theta (x)$ is continuous,
    \item $\theta \mapsto D v_t^\theta (x)$ is continuous uniformly in $x,$ and
    \item $\theta \mapsto D_\theta v_t^\theta (x)$ is continuous uniformly in $x$
\end{enumerate}
\end{lemma}

Consider the ODE
\begin{equation*}
    \frac{d}{d t} 
    Z (t, \theta, x)
    =
    A (t, \theta, x)
    Z (t, \theta, x)
    +
    g (t, \theta, x)
    , \
    Z (0, \theta, x)
    =
    0
    ,
\end{equation*}
where 
\begin{align*}
    A (t, \theta, x)
    \coloneqq
    D_x v_t^\theta (x)
    , \
    g (t, \theta, x)
    \coloneqq
    D_\theta v_t^\theta (x)
    .
\end{align*}
By Lemma \ref{lem:regularity-parametrized-vector-field} and Theorem 3.12 of \cite{teschl2012ordinary}, this ODE admits a unique solution $t \mapsto Z (t, \theta, x)$ for each $(\theta, x).$
By Theorem 2.8 of \cite{teschl2012ordinary}, we have
\begin{align*}
    \norm{Z (t, \theta, x) - Z (t, \tilde \theta, x)}
    &\leq
    \frac{C \norm{D v_t^\theta - D v_t^{\tilde \theta}}_{C^0} + \norm{D_\theta v_t^\theta - D_\theta v_t^{\tilde \theta}}_{C^0}}{\text{Lip} (D v_t^\theta)}
    \exp\left(
        \text{Lip} (D v_t^\theta) t
        -
        1
    \right) 
\end{align*}
for some constant $C > 0.$
Since the RHS converges to zero as $|\theta - \tilde \theta| \to 0$ by Lemma \ref{lem:regularity-parametrized-vector-field}, the map $\theta \mapsto Z (t, \theta, x)$ is continuous for each $(t, x).$
Finally, Lemma \ref{lem:regularity-parametrized-vector-field} and Gronwall's inequality show $Z (t, \theta, x) = D_\theta \Psi (t, \theta, x).$
\end{proof}

\section{Regularity Conditions on $(\mu_x)_{x \in \cX}$}
\begin{assumption} [Regularity of $(\mu_x)_{x \in \cX}$] \label{ass:regularity-mu}
\mbox{}
\begin{enumerate}
    \item Let $1 \leq i < j \leq d.$ For $1 \leq k < j$ such that $k \neq i,$ fix $\tilde p_k \in \cP_k.$ Let $\mu_{p_i}^{p_j} \coloneqq \mu_{(\tilde p_1, \dots, \tilde p_{i - 1}, p_i, \tilde p_{i + 1}, \dots, \tilde p_{j - 1}, p_j, \underline p_{j + 1}, \dots, \underline p_d, \underline y)}.$ Consider $p_i \mapsto \mu_{p_i}^{p_j}$ as a path of probability measures parametrized by $p_j.$ The family $(\mu_{p_i}^{p_j})_{p_i \in \cP_i, p_j \in \cP_j}$ satisfies Assumption \ref{ass:regularity-parametrized-mu-general}.
    \item Let $1 \leq i \leq d.$ For $1 \leq k \leq d$ such that $k \neq i,$ fix $\tilde p_k \in \cP_k.$ Let $\mu_y^{p_i} \coloneqq \mu_{(\tilde p_1, \dots, \tilde p_{i - 1}, p_i, \tilde p_{i + 1}, \dots, \tilde p_d, y)}.$ Consider $y \mapsto \mu_y^{p_i}$ as a path of probability measures parametrized by $p_i.$ The family $(\mu_y^{p_i})_{y \in \cY, p_i \in \cP_i}$ satisfies Assumption \ref{ass:regularity-parametrized-mu-general}.
\end{enumerate}
\end{assumption}

\section{Omitted Proofs}

\begin{proof}[Proof of Lemma \ref{lem:smooth-demand-system}]
For simplicity, we consider the case of $d = 2.$
There exists a continuously differentiable function $R_1 : \cP_1 \times \Omega_{\underline x} \to \bR_+^d$ such that $R_1 (p_1, \cdot)_\# \mu_{\underline x} \sim \mu_{(p_1, \underline p_2, \underline y)}$ for $p_1 \in \cP_1$ by applying Theorem \ref{thm:smooth-parametrized-path-exists} to the family $(\mu_{(p_1, \underline p_2, \underline y)})_{p_1 \in \cP_1}.$
Next, fix $p_1 \in \cP_1.$
There exists a continuously differentiable function $R_2 (\cdot, \cdot \mid p_1) : \cP_2 \times \Omega_{(p_1, \underline p_2, \underline y)} \to \bR_+^d$ such that $R_2 (p_2, \cdot \mid p_1)_\# \mu_{(p_1, \underline p_2, \underline y)} = \mu_{(p_1, p_2, \underline y)}$ for $p_2 \in \cP_2$ by applying Theorem \ref{thm:smooth-parametrized-path-exists} to the family $(\mu_{(p_1, p_2, \underline y)})_{p_2 \in \cP_2}.$
Finally, fix $(p_1, p_2) \in \cP.$
There exists a continuously differentiable function $R (\cdot, \cdot \mid p_1, p_2) : \cY \times \Omega_{(p_1, p_2, \underline y)} \to \bR_+^d$ such that $R (y, \cdot \mid p_1, p_2)_\# \mu_{(p_1, p_2, \underline y)} = \mu_{(p_1, p_2, y)}$ for $y \in \cY$ by applying Theorem \ref{thm:smooth-parametrized-path-exists} to the family $(\mu_{(p_1, p_2, y)})_{y \in \cY}.$
Let $Q (p_1, p_2, y, \omega) \coloneqq R (y, R_2 (p_2, R_1 (p_1, \omega) \mid p_1) \mid p_1, p_2)$ for $p = (p_1, p_2) \in \cP$ and $\omega \in \Omega_{\underline x}.$
By construction, $Q (x, \cdot)_\# \mu_{\underline x} = \mu_x$ holds.
The partial differentiability of $Q$ in $p_i$ and $y$ follows by Theorem \ref{thm:smooth-parametrized-path-exists}.
\end{proof}

\begin{proof}[Proof of Lemma \ref{lem:rotate-vector-field}]
Let $U$ be an open ball such that in $\overline U \subset \Omega_{\underline x}.$
There exists a smooth nonnegative function $\underline \psi : \Omega_{\underline x} \to \bR_+$ such that $\overline{\{q \in \Omega_{\underline x} \mid \underline \psi (q) > 0\}} \subset U$ and $\int_{\Omega_{\underline x}} \underline \psi = 1.$
For each $x \in \cX,$ define a function $\psi_x : \Omega_x \to \bR$ as $\psi_x \coloneqq \underline \psi \circ T_x^{-1}.$
Then $\psi_x$ is $C^2$ and satisfies $\int_{\Omega_x} \psi_x = \int_{\Omega_{\underline x}} \underline \psi = 1.$
Moreover, the derivative $D \psi_x$ is Lipschitz uniformly in $x$ due to the regularity of $T_x.$
If we define $\psi_x^a \coloneqq a (x) \psi_x (\cdot),$ then $\int_{\Omega_x} \psi_x^a = a (x)$ holds.
Let $w_x \coloneqq (w_{x, 1}, \dots, w_{x, d})^\prime$ where
$$
    w_{x, i} (q)
    \coloneqq
    \frac{1}{\mu_x (q)}
    \left(
        -
        \sum_{k \geq i}
        D_k
        \psi_x^{a^{i, k}}
        +
        \sum_{k < i}
        D_k
        \psi_x^{a^{k, i}}
    \right)
    .
$$
We shall see that $w_x$ satisfies all the conditions of Lemma \ref{lem:rotate-vector-field}.
By the property of $\psi_x$ and continuity of $a (\cdot),$ $w_x$ is Lipschitz uniformly in $x.$
The third condition of (\ref{eq:PDE-rotation}) is satisfied because for $i \leq j,$
$$
    \int
        w_{x, i}
        q_j
    d \mu_x
    =
    \int_{\Omega_x}
        \left(
            -
            \sum_{k \geq i}
            D_k
            \psi_x^{a^{i, k}}
            +
            \sum_{k < i}
            D_k
            \psi_x^{a^{k, i}}
        \right)
        q_j
    dq 
    =
    \int_{\Omega_x} 
        \psi_x^{a^{i, j}}
    dq
    =
    a^{i, j} (x)
    ,
$$
where the second equality holds by the integration by parts, and similarly, for $i > j,$
$$
    \int
        w_{x, i}
        q_j
    d \mu_x
    =
    -
    a^{j, i} (x)
    =
    a^{i, j} (x)
    .
$$
The second condition holds since $\underline \psi$ vanishes on the boundary.
The first condition holds since 
$$
    \sum_{i = 1}^d
    D_i
    \left(
        w_{x, i}
        \mu_x
    \right)
    =
    -
    \sum_{i = 1}^d
    \sum_{k > i}
    D_{i, k}
    \psi_x^{a^{i, k}}
    +
    \sum_{i = 1}^d
    \sum_{k < i}
    D_{i, k}
    \psi_x^{a^{k, i}}
    =
    0
    .
$$
\end{proof}

\begin{proof}[Proof of Lemma \ref{lem:Q-satisfies-marginal-slutsky}]
For the partial differentiability, it suffices to show $p_i \mapsto \bar \Psi_{p, \omega} (y)$ is differentiable, which is a consequence of Theorem \ref{thm:smooth-parametrized-path-exists}.
The marginal compliance (\ref{eq:marginal-compliance}) is shown using the continuity equation.
Let $\psi \in C^1_c (\bR^d)$ be a test function.
Then, the map $y \mapsto \int \psi d \mu_{p, y}$ is absolutely continuous and it holds
$$
    \int_{\bR^d}
        \ip{\nabla \psi}{v_{p, y}}
    d \mu_{p, y}
    =
    \int_{\Omega_{p, y}}
        \ip{\nabla \psi}{\bar v_{p, y}}
    d \mu_{p, y}
    +
    \int_{\Omega_{p, y}}
        \ip{\nabla \psi}{w_{p, y}}
    d \mu_{p, y}
    =
    \frac{d}{d t}
    \int_{\Omega_t}
        \psi
    d \mu_t
    ,
$$
where the last equality holds by Lemma \ref{lem:rotate-vector-field}.
By Lemma \ref{lem:rotate-vector-field} and the definition of $a (x),$ we have
\begin{align*}
    (\bE [S_Q (x)])_{i, j}
    &=
    \bE [D_{p_j} Q_i (x) + D_y Q_i (x) \cdot Q_j (x)]
    \\
    &=
    \bE \left[
        D_{p_j} Q_i (x) 
        + 
        v_{p, y, i} (Q (x)) \cdot Q_j (x)
    \right]
    \\
    &=
    \bE \left[
        D_{p_j} Q_i (x) 
        + 
        \left(
            \bar v_{p, y, i} (Q (x)) + w_{p, y, i} (Q (x))
        \right) 
        \cdot 
        Q_j (x)
    \right]
    \\
    &=
    S_{i, j} (x)
    ,
\end{align*}
which implies that condition (\ref{eq:slutsky-equivalence}) holds.
\end{proof}

\begin{proof}[Proof of Theorem \ref{thm:identified-set-average-slutsky}]
It is easy to see the function $x \mapsto \bE [S_{Q^\ast} (x)]$ is an element of $\cS_d.$
Conversely, Lemma \ref{lem:Q-satisfies-marginal-slutsky} shows that for a given $S \in \cS_d,$ there exists a $\cQ$-valued random demand function $Q$ such that $Q (x) \sim \mu_x$ and $\bE [S_Q (x)] = S (x)$ for all $x \in \cX.$ 
For the second part of the theorem, if we define 
$$
    S_{i, j} (x)
    \coloneqq
    \frac{1}{2}
    T_{i, j} (x)
    \eqqcolon
    S_{j, i} (x)
    ,
$$
then $S \in \cS_d$ holds, and $S (x)$ is symmetric for all $x \in \cX.$
\end{proof}

\begin{proof}[Proof of Lemma \ref{lem:regularity-vector-field}]
By the change-of-variable formula (see, for example, Lemma 2.62 of \cite{sokolowski1992introduction}), $u = \tilde u_t \coloneq u_t \circ T_t$ solves
\begin{equation} \label{eq:PDE-change-of-variable}
    \begin{cases}
        \nabla \cdot (A_t \nabla u)
        =
        - \tilde f_t
        \text{ in }
        \Omega_0
        \\
        \ip{A_t \nabla u}{\vn_0}
        =
        0
        \text{ on }
        \partial \Omega_0
    \end{cases}
    ,
\end{equation}
where $A_t \coloneqq (D T_t)^{-1}$ and $\tilde f_t \coloneq f_t \circ T_t.$
By (\ref{ass:a12}), (\ref{ass:a16}), and Theorem 2.1 of \cite{kono2025well}, $u = \tilde u_t$ is the unique solution satisfying $\int_{\Omega_0} u = 0,$ and by his Theorem 3.1, an Schauder estimate 
$$
    \norm{\tilde u_t}_{C^{2, \alpha}}
    \leq
    C \norm{\tilde f_t}_{C^{0, \alpha}}
$$
holds for some $C > 0$ independent of $t.$\footnote{Note that the coefficient $A_t$ is fixed in the original version of Theorem 3.1 of \cite{kono2025well}, but Theorem 6.30 of \cite{gilbarg1977elliptic} ensures that $C$ depends on $A_t$ only through $\norm{A_t}_{C^{1, \alpha} (\bar \Omega_0)},$ which is bounded by \ref{ass:a12}.}
Using the equality $D u_t = (D \tilde u_t \circ T_t^{-1}) D T_t^{-1},$ we have
$$
    \norm{v_t (x) - v_t (y)}
    =
    \norm{\frac{\nabla u_t (x)}{\rho_t (x)} - \frac{\nabla u_t (y)}{\rho_t (y)}}
    \leq
    \tilde C \norm{x - y}
    ,
$$
where $\tilde C > 0$ is independent of $t,$ which shows the uniform Lipschitzness of $v_t.$

The fact that $u_t \in C^{2, \alpha} (\bar \Omega_t),$ combined with (\ref{ass:a11}) and (\ref{ass:a13}), implies the differentiability of $v_t (x) = \nabla u_t (x) / \rho_t (x),$ and the derivative is
$$
    D v_t (x)
    =
    \frac{1}{\rho_t (x)^2}
    \left(
        \rho_t (x)
        D^2 u_t (x)
        -
        \nabla u_t (x)
        (\nabla \rho_t (x))^\prime
    \right)
    .
$$
By (\ref{ass:a13}) and (\ref{ass:a14}), it is standard to show that $D v_t$ is H\"older continuous uniformly over $t.$

We finally show that $D v_t$ is continuous in $t.$
Observe that the solution of (\ref{eq:PDE-change-of-variable}) satisfies
$$
    \begin{cases}
        \nabla \cdot (A_t \nabla (u_{t + \varepsilon} - u_t))
        =
        \tilde f_{t + \varepsilon} - \tilde f_t
        -
        \nabla \cdot ((A_{t + \varepsilon} - A_t) \nabla u_{t + \varepsilon})
        \text{ in }
        \Omega_0
        \\
        \ip{A_t \nabla (u_{t + \varepsilon} - u_t)}{\vn_0}
        =
        -
        \ip{(A_{t + \varepsilon} - A_t) \nabla u_{t + \varepsilon}}{\vn_0}
        \text{ on }
        \partial \Omega_0
    \end{cases}
    .
$$
By the Schauder estimate, we have
\begin{align*}
    &\phantom{{}\leq{}}
    \norm{u_{t + \varepsilon} - u_t}_{C^{2, \alpha} (\bar \Omega_0)}
    \\
    &\leq
    C
    \left(
        \norm{
            \tilde f_{t + \varepsilon} - \tilde f_t
            -
            \nabla \cdot ((A_{t + \varepsilon} - A_t) \nabla u_{t + \varepsilon})
        }_{C^{0, \alpha} (\bar \Omega_0)}
        +
        \norm{
            \ip{(A_{t + \varepsilon} - A_t) \nabla u_{t + \varepsilon}}{\vn_0}
        }_{C^{1, \alpha} (\bar \Omega_0)}
    \right)
    \\
    &\to
    0
\end{align*}
as $\varepsilon \to 0$ by Assumption \ref{ass:a18} and Theorem 3.1 of \cite{kono2025well}.
In particular, $t \mapsto D v_t (x)$ is continuous (even uniformly in $x$).
\end{proof}

\begin{proof}[Proof of Lemma \ref{lem:extend-vector-field}]
Let
$$
    L
    \coloneqq
    \sup_{t \in [0, 1]}
    \sup_{x, y \in \Omega_t, x \neq y}
    \frac{\norm{v_t (x) - v_t (y)}}{\norm{x - y}}
    ,
$$
which is finite by Lemma \ref{lem:regularity-vector-field}.
For $i = 1, \dots, d,$ define
\begin{align*}
    \bar v_{t, i} (x)
    \coloneqq
    \sup_{y \in \bar \Omega_t}
    \left(
        v_{t, i} (y)
        -
        L \norm{x - y}
    \right)
\end{align*}
for $(t, x) \in [0, 1] \times \bR^d.$
Since $(x, t) \mapsto v_{t, i} (x)$ is continuous, so is $(x, t) \mapsto \bar v_{t, i} (x)$ by Berge's maximum theorem.
Also, McShane’s extension theorem implies that $\bar v_{t, i}$ is Lipschitz uniformly over $t.$
\end{proof}

\begin{proof}[Proof of Lemma \ref{lem:regularity-parametrized-vector-field}]
The differentiability of $v_t^\theta (x)$ in $x$ is obvious.
The first and third bullet points are shown in the same way as Lemma \ref{lem:regularity-vector-field}.
To show the second and fourth points, observe that the following PDE holds:
$$
    \begin{cases}
        \nabla \cdot (A_t^\theta \nabla (D_\theta u_t^\theta)) = - D_\theta \tilde f_t^\theta - \nabla \cdot (D_\theta A_t^\theta \nabla u_t^\theta)
        \\
        \ip{A_t^\theta \nabla (D_\theta u_t^\theta)}{\vn_0} = - \ip{D_\theta A_t^\theta \nabla u_t^\theta}{\vn_0}
    \end{cases}
    .
$$
The standard Schauder estimate shows the differentiability of $\theta \mapsto \nabla u_t^\theta$ by Assumption \ref{ass:a23}.
Since the derivative is continuous in $t$ and $\theta$ by Assumption \ref{ass:a23}, the second and fourth bullet points are confirmed.
\end{proof}

\printbibliography

\end{document}